\documentclass[a4paper,twocolumn,11pt, preprint]{quantumarticle}
\pdfoutput=1
\usepackage[utf8]{inputenc}
\usepackage[english]{babel}
\usepackage[T1]{fontenc}
\usepackage{amsmath}

\usepackage{tikz}
\usepackage{lipsum}

\usepackage{subcaption}

\usepackage{amsthm}
\usepackage{amssymb}
\usepackage{mathtools}
\usepackage{graphicx}
\usepackage{braket}
\usepackage[colorlinks=true, allcolors=blue]{hyperref}

\usepackage{bbm}
\usepackage{comment}

\usepackage[most]{tcolorbox}

\newtheorem{proposition}{Proposition}

\renewcommand{\k}{\boldsymbol{k}}
\newcommand{\x}{\boldsymbol{x}}
\newcommand{\y}{\boldsymbol{y}}

\newcommand{\z}{\boldsymbol{z}}
\newcommand{\s}{\boldsymbol{s}}

\title{Efficient training of photonic quantum generative models}
\author{Felix Gottlieb}
\affiliation{Quandela, 7 Rue Léonard de Vinci, 91300 Massy, France}
\author{Chayma Faraji}
\affiliation{Quandela, 7 Rue Léonard de Vinci, 91300 Massy, France}
\author{Rawad Mezher}
\affiliation{Quandela, 7 Rue Léonard de Vinci, 91300 Massy, France}
\author{Brian Ventura}
\affiliation{Quandela, 7 Rue Léonard de Vinci, 91300 Massy, France}
\author{Shane Mansfield}
\affiliation{Quandela, 7 Rue Léonard de Vinci, 91300 Massy, France}
\author{Alexia Salavrakos}
\email{alexia.salavrakos@quandela.com}
\affiliation{Quandela, 7 Rue Léonard de Vinci, 91300 Massy, France}

\begin{document}

\maketitle

\begin{abstract}
The topic of generative learning has gained traction within the field of quantum machine learning, in particular with the advent of train-on-classical, deploy-on-quantum methods. This approach exploits the properties of intermediate-complexity circuits whose training can be simulated classically efficiently, but that generally require quantum hardware for the corresponding sampling problem. Quantum linear optics possess similar properties, which allow us to propose an efficient training procedure for photon-native quantum generative models based on the maximum mean discrepancy, where the deployment of the model corresponds to the task of boson sampling. We provide numerical results, propose datasets, and we also explore how initialization strategies, kernel and ansatz choice affect the training.
\end{abstract}

\section{Introduction}
The field of quantum machine learning (QML) is faced with two main challenges: training quantum models efficiently and at scale; and identifying practically useful problems for which quantum models outperform classical ones -- ideally, showing the model is non-dequantizable. As highlighted in \cite{gilfuster2025relationtrainabilitydequantizationvariational}, no work to date fulfills all criteria. However, noteworthy results in this direction include the first example of a rigorous speedup in QML \cite{Liu_2021}, the guarantee of backpropagation scaling for a specific family of circuits \cite{Bowles_2025}, as well as frameworks for formal learning separations \cite{gyurik2024exponentialseparationsclassicalquantum, molteni2025quantummachinelearningadvantages}. 

In the context of generative learning, the recent work of \cite{Recio-Armengol2025} proposes a compelling approach that targets the problem of training QML models efficiently and at scale. The authors present a generative model based on parametrized Instantaneous Quantum Polynomial (IQP) circuits \cite{Shepherd_2009}. Two facts about these circuits combine to make training classically efficient. First, for IQP circuits the expectation value of observables can be approximated  using den Nest's classical algorithm \cite{nest2010simulating}. Second, in the related work of \cite{rudolph2023trainability}, a loss function used in generative learning known as the Maximum Mean Discrepancy (MMD) \cite{Gretton_2012} is shown to correspond to expectation values of circuit observables. Since expectation values are classically efficient for IQP circuits, this loss function can be evaluated classically.

%for which it is known that expectation values of observables can be approximated classically efficiently through the den Nest algorithm \cite{nest2010simulating}. Importantly, in the related work of \cite{rudolph2023trainability}, the authors consider a loss function often used in generative learning called the Maximum Mean Discrepancy (MMD) \cite{Gretton_2012} and obtain a way to express it as a quantum observable. This means that by using this MMD observable as a loss function, IQP-based models can be trained classically efficiently.

Thus, successful tools from classical ML like automatic differentiation and backpropagation can be used in the training of the model. This solves a critical challenge in the optimization of parametrized quantum circuits: it allows us to bypass estimating gradients on quantum hardware via the parameter-shift rule or other methods whose cost scales with the number of parameters in the circuit, and instead inherit the favorable scaling of backpropagation.

%indeed, without backpropagation scaling, each estimation of the gradient of the loss function requires as many circuit evaluations as there are parameters, making gradient descent very costly as circuits grow in size.

At the same time, in generative learning, the task of deploying the model simply means generating new data samples -- in the case of parametrized quantum circuits, this only requires sampling from the circuit through measurement. Crucially, the task of sampling from IQP circuits remains hard classically in general, despite the den Nest algorithm. This means that quantum hardware would likely still be required for the model to be deployed and used. 

Interestingly, IQP circuits are not the only circuits with these properties. In particular, discrete variable linear optical quantum computing (DVLOQC) also offers this level of ``intermediate complexity''. Indeed, boson sampling is expected to be hard classically \cite{aaronson2011}, while there exist efficient classical algorithms for estimating the expectation value of observables in linear optics \cite{Gurvits_2005, lim2025}. We thus set out to extend this framework to DVLOQC. 

% Paragraph with more details
In this work, we propose a training procedure for photon-native \cite{Salavrakos_2025_perspective} quantum generative models based on the MMD and on Gurvits' algorithm for the approximation of expectation values of observables in linear optics. We train our model on datasets that can be represented as fixed Hamming-weight bitstrings, as this corresponds to the natural outputs of a linear optical circuit. In our simulations we train models with up to 16 photons in 256 modes and more than 100 000 parameters in reasonable time on a laptop. We explore initialization strategies as well as different interferometer meshes, kernel choices, and parametrizations of the ansatz. Our code framework is available on the MerLin platform \cite{notton2026merlindiscoveryenginephotonic}, Quandela's open-source Python framework for photonic QML.

\section{Preliminaries}

\subsection{Discrete variable linear optical quantum computing}\label{sec:dvloqc} We consider a setup where a source emits single photons which are prepared as Fock states $\ket{s} \in \Phi_{m, n}$, where 
\begin{equation}
    \Phi_{m, n} = \{\ket{s_1, \dots, s_m} \big|  \sum_i s_i = n,  s_i \in \{0, \dots n\}\}, \nonumber
\end{equation} 
with $m$ the number of optical modes and $n$ the number of photons. State $\ket{s}$ is sent as an input to a circuit which is a universal interferometer made of beam-splitters and phase-shifters. Such an interferometer is said to be universal as it can express any $m \times m$ unitary $U$, i.e. any unitary evolution in the single-particle space where state $\ket{s}$ contains only $n = 1$ photon. When $n \geq 1$, the interferometer applies the operator $\Tilde{U} = \varphi(U)$, which is the extension of $U$ on the multi-particle space\footnote{See, e.g. \cite{aaronson2011} for a description of mapping $\varphi$. Note that we adopt the tilde notation for operators acting on the multi-particle space.}.

At the output of the circuit, there are photon number detectors measuring in the Fock basis of states in $\Phi_{m, n}$, thus producing output arrangements of the form $\s' = (s'_1, \dots s'_m)$, with $\sum_i s'_i = n$ in the lossless case where the number of photons $n$ is conserved. For an input state $\ket{s^{in}}$ and an output state $\ket{s^{out}}$, the corresponding probability amplitude is given by
\begin{equation}
\bra{s^{out}} \Tilde{U} \ket{s^{in}} = \frac{\text{Perm}(U_{s^{in}, s^{out}})}{\sqrt{\Pi_i s^{in}_i! \Pi_j s^{out}_j!}},
\end{equation}
where $U_{s^{in}, s^{out}}$ is a submatrix of the single-particle unitary $U$, formed by taking $s^{out}_i$ times the $i$th row of $U$, then $s^{in}_j$ times the $j$th column of that intermediate matrix. 

Sampling from the circuit that prepares the state $\Tilde{U}\ket{\psi}$ is a task known as \emph{boson sampling}. Both exact sampling and approximate sampling are expected to be classically hard -- provided some conditions on $U$, as conjectured by Aaronson and Arkhipov \cite{aaronson2011}. 

For estimation problems however, the situation is different. In general, an observable on the Fock space is defined as a normal operator $\mathcal{Q}$ that satisfies $\mathcal{Q} = \Tilde{U}^\dagger \Lambda \Tilde{U}$, where $\Lambda$ is diagonal in the Fock basis: $\Lambda \ket{s} = \lambda(s) \ket{s}$ \cite{defelice2024}. The expectation value of $\mathcal{Q}$ is thus:
\begin{align}
E (\mathcal{Q}) = \bra{\psi} \Tilde{U}^\dagger \Lambda \Tilde{U} \ket{\psi} & = \sum_{s \in \Phi_{m, n}} \lambda(s) P_U (s | \psi) \nonumber \\
& = \sum_{s \in \Phi_{m, n}} \lambda(s) \|\bra{s} \Tilde{U} \ket{\psi}\|^2 \nonumber \\
& = \sum_{s \in \Phi_{m, n}} \lambda(s) \left| \frac{\text{Perm}(U_{\psi, s})}{\sqrt{\Pi_i s_i! \Pi_j \psi_j!}} \right|^2 \nonumber
\end{align}
for some input state $\ket{\psi}$. Note that $\mathcal{Q}$ is defined to be normal rather than Hermitian which means that the eigenfunction $\lambda$ can be complex valued.

Now, if we start from an operator defined on the single-particle space $Q = U^{\dagger} D U$ where $U$ is unitary and $D = \text{diag}(l_1, \dots l_m)$, then we can map it to an observable $\Tilde{Q}$ on the multi-particle space. There, its eigenvalues are given by the products of the $l_i$ via the following eigenfunction: 
\begin{equation}
\lambda(s) = \prod_i l_i^{s_i},
\label{eq:eigenvalues_lambda}
\end{equation}
where the $s_i$ are the mode occupancy numbers of Fock state $\ket{s}$. We then have the following bound on the singular value of $\Tilde{Q}$: $|\lambda(s)| \leq \text{max} (|l_i|)^n = ||Q||_2^n$. Additionally, the expectation value of such an observable can be expressed by a single permanent:
\begin{equation}
E (\Tilde{Q}) = \bra{\psi} \Tilde{Q} \ket{\psi} = \frac{\text{Perm}(Q_{\psi, \psi})}{\sqrt{\Pi_i \psi_i! \Pi_j \psi_j!}}. \label{eq:expectation_q}
\end{equation}
Importantly, this quantity can be approximated classically in $O(n^2/\epsilon^2)$ with precision $\pm \epsilon \|Q\|_2^n$ using Gurvits' algorithm and its generalizations \cite{Gurvits_2005, aaronson2012}\footnote{See also \cite{lim2025} for further results on the classical estimation of probabilities and expectation values in DVLOQC, including input Fock states in superposition.}. Hence, if $\|Q\|_2 \leq 1$, the quantity (\ref{eq:expectation_q}) can be approximated classically \emph{efficiently}. Note that Gurvits' algorithm is based on the Glynn estimator, which for a given $x\in\left\{
-1,1\right\}^{n}$ is defined for an $n\times n$
matrix $A=(a_{i,j})$ as
\begin{equation}
\operatorname*{Gly}\nolimits_{x}(A):=x_{1}\dotsm x_{n}\prod_{i=1}^{n}\left(
a_{i,1}x_{1}+\dotsb+a_{i,n}x_{n}\right).
\label{eq:glynn}
\end{equation}
More precisely, we have that:
\begin{equation}
\operatorname*{Per}\left(  A\right)  = \mathbb{E}_{x\in\{-1,1\}^{n}%
}\left[  {\operatorname*{Gly}\nolimits_{x}(A)}\right]. \nonumber
\end{equation}

\subsection{Generative model}\label{sec:generative_model}
Our model is the photonic quantum circuit Born machine (QCBM) described in \cite{Salavrakos_2025}, i.e. a variational quantum generative model based on DVLOQC with $n$ photons and $m$ modes. In a nutshell, the model consists of an input Fock state $\ket{s}$ that goes through a parametrized linear optical interferometer that prepares state $\ket{\psi_\theta} = \Tilde{U}_\theta \ket{s}$, which is then measured by the photon detectors. The outputs strings $\s' = (s'_i, \dots s'_m)$ are the generated data: each measurement output vector $\s'$ is considered to be a sample generated by the model.

The parameters $\theta$ of the model are the phases of phase shifters inside the circuit interferometer, which is thus described by an $m \times m$ matrix $U_\theta$. This matrix can be decomposed into different linear optical ans{\"a}tze: in the case of the rectangular interferometer \cite{Clements_2016}, $U_\theta$ is decomposed in a product of $T(\theta^{j}_{i}, \theta'^{j}_{i})$ matrices that each correspond to a Mach-Zender Interferometer (MZI) element between the  $j$-th and $(j+1)$-th modes:
\begin{equation*}
    \begin{pmatrix}
1&0&\cdots&&&\cdots&&0\\
0&1&&&&&&\vdots\\
\vdots&&&e^{i\theta^{j}_{i}} \cos{\frac{\theta'^{j}_{i}}{2}}&-\sin{\frac{\theta'^{j}_{i}}{2}}&\\
&&&e^{i\theta^{j}_{i}} \sin{\frac{\theta'^{j}_{i}}{2}}&\cos{\frac{\theta'^{j}_{i}}{2}}&&&\vdots\\
\vdots&&&&&&1&0\\
0&\cdots&&&&\cdots&0&1\\
\end{pmatrix}
\end{equation*}
Overall, the resulting $U_\theta$ matrix can be expressed as a product of $m(m-1)/2$ such matrices and a diagonal matrix. There is a total of $m^2$ phases in the interferometer, i.e. $m^2$ parameters in the model. Note that this is only one way of parametrizing $U_\theta$, as we discuss in Section \ref{sec:ansatze}.

As for the learning problem, a target dataset is given which contains $N$ training data samples which should also be of the form $\x = (x_1, \dots x_m)$. The goal is to train the model parameters $\theta$ so that the samples $\s'$ produced by the model resemble the target samples $\x$ according to a chosen loss function and metrics.

%\subsection{No-collision case}
%As shown in \cite{aaronson2011}, if the number of modes in a linear optical experiment is such that $m >> n^2$, the probability that there are two or more photons are in the same mode is not large and we can consider that we only have $1$ or $0$ photons per mode, especially in the limit of large $m$ and $n$. In this work, we focus on this regime both because it allows us to straightforwardly extend the results of \cite{rudolph2023trainability, Recio-Armengol2025}, but also because in practice in experimental setups, threshold detectors are more readily available than photon number resolving detectors. 

\section{Estimating the MMD loss in linear optics}
\subsection{MMD observable}\label{sec:MMD}
One such metric used in generative learning is the MMD, which was introduced in the context of classical ML in \cite{Gretton_2012}. It measures a distance between distributions $p$ and $q$ and is expressed as:
\begin{align}\label{eq:truemmd}
\text{MMD}^2(p, q) = \ &  \mathbb{E}_{\x,\y\sim p}\left[K(\x,\y)\right] \nonumber \\ & - 2\mathbb{E}_{\x\sim p, \y\sim q}\left[K(\x,\y)\right] \nonumber \\ & +  \mathbb{E}_{\x,\y\sim q}\left[K(\x,\y)\right], 
\end{align}
where $K(\x, \y)$ is a kernel function. We pick a common kernel choice studied in \cite{rudolph2023trainability, Recio-Armengol2025}, the Gaussian kernel with bandwidth parameter $\sigma$ :
\begin{align}\label{eq:kernel}
    K_\sigma (\x, \y) = \exp\left(\frac{-\vert\vert \x-\y \vert\vert^2}{2\sigma^2}\right),
\end{align}
where $\vert\vert \x-\y \vert\vert^2 = \sum_i (x_i - y_i)^2$. Note that another way to write the MMD is the following:
\begin{align}\label{eq:truemmd2}
\text{MMD}^2(p, q) = \ &  \sum_{\x, \y \in \mathcal{X}} p(\x) p(\y) K(\x, \y) \nonumber \\
\ & - 2 \sum_{\x, \y \in \mathcal{X}} p(\x) q(\y) K(\x, \y) \nonumber \\ 
\ & + \sum_{\x, \y \in \mathcal{X}} q(\x) q(\y) K(\x, \y).
\end{align}
An important property of the MMD is that it can be computed as an \emph{average over samples drawn from the target and model distributions}, thus making it an implicit loss function for a generative model, as pointed out in \cite{rudolph2023trainability}. Indeed, it can be obtained by sampling batches of vectors $\mathcal{X}=\{\x_i\sim p\}$ and $\mathcal{Y}=\{\y_j\sim q\}$ and using the following estimator of the $\text{MMD}^2$ which is an unbiased estimator: 
\begin{align}\label{eq:MMD_greton}
\hat{\text{MMD}}^2(\mathcal{X},\mathcal{Y}) = \ & \frac{1}{|\mathcal{X}|(|\mathcal{X}|-1)} \sum_{i\neq j} K(\x_i, \x_j) \nonumber \\ & - \frac{2}{|\mathcal{X}||\mathcal{Y}|}\sum_{i,j}K(\x_i,\y_j) \nonumber \\ & + \frac{1}{|\mathcal{Y}|(|\mathcal{Y}|-1)}\sum_{i\neq j}K(\y_i,\y_j).
\end{align} 

In \cite{rudolph2023trainability}, the authors highlight that, in QML, each of the three terms in the MMD loss function can be seen as the expectation value of an observable:
\begin{align}
\mathcal{M}(\rho, \rho') = \text{Tr}\left[ O_{\text{MMD}}^{(\sigma)} (\rho \otimes \rho')\right],
\end{align} 
with:
\begin{align}
O_{\text{MMD}}^{(\sigma)} = \sum_{\x, \y} K_{\sigma}(x, y) \ket{x}\bra{x} \otimes \ket{y}\bra{y},
\end{align} 
and where, depending on which term of the MMD is considered, $\rho$ and $\rho'$ can either be the state prepared by the quantum model, or a state representing the classical training data. Then, they show that $O_{\text{MMD}}^{(\sigma)}$ can be rewritten as a sum of Pauli strings $Z^{\k}$, which are tensor products of Pauli and identity operators on $m$ subsystems: 
\begin{equation}
Z^{\k} = Z_1^{k_1} \otimes \dots \otimes Z_m^{k_m}, \nonumber
\end{equation}
where $Z_i^{k_i} = Z_i$ if $k_i = 1$ and $Z_i^{k_i} = \mathbf{I}$ if $k_i = 0$. More precisely, they obtain: 
\begin{equation}
O_{\text{MMD}}^{(\sigma)} = \sum_{\k} (1-p_{\sigma})^{m-\vert \k \vert}p_{\sigma}^{\vert \k \vert} Z^{\k} \otimes Z^{\k},
\end{equation}
where $p_{\sigma}=\frac{1-e^{-\frac{1}{2\sigma^2}}}{2}$ and where $\vert \k \vert$ is the Hamming weight of $\k$. Putting everything together, this means that the MMD loss function amounts to a mixture of expectation values of observables:
\begin{align*}
%\label{eq:expmmdrudolph}
\text{MMD}^2(p,q) &= \mathbb{E}_{\k\sim \mathcal{P}_{\sigma}(\k)}\Big[ \big(\langle Z^{\k} \rangle_p  - \langle Z^{\k} \rangle_{q}\big)^2 \Big]
\end{align*}
where $\k$ is distributed according to a product of Bernoulli distributions with probability $p_{\sigma}$,
\begin{align}\label{eq:prob_p_sigma}
    \mathcal{P}_{\sigma}(\k) = (1-p_{\sigma})^{m-\vert \k \vert}p_{\sigma}^{\vert \k \vert}.
%\label{eq:kprobability}
\end{align}
Thus, the MMD loss function can be evaluated by sampling $\k$ vectors and computing the MMD as a sample average.

Now, let us return to linear optics. We define $W^{\k}$ as an operator in the single-particle space as the following diagonal $m \times m$ matrix:
\begin{equation}
    W^{\k} =
  \begin{pmatrix}
    (-1)^{k_1} & & \\
    & \ddots & \\
    & & (-1)^{k_m}
  \end{pmatrix},
\label{eq:z_mbym_matrix}
\end{equation}
and we denote $\Tilde{W}^{\k}$ its extension in the multi-particle space. Additionally, we define data domains $\mathcal{X}_{m, n}$ and $\Tilde{\mathcal{X}}_{m, n}$ of vectors of size $m$ with fixed Hamming-weight $n$:
\begin{align*}
& \mathcal{X}_{m, n} = \{(x_1, \dots, x_m) \big|  \sum_i x_i = n,  x_i \in \{0, \dots n\}\}, \\
& \Tilde{\mathcal{X}}_{m, n} = \{(x_1, \dots, x_m) \big|  \sum_i x_i = n,  x_i \in \{0, 1\}\}.
\end{align*}
This case of $\Tilde{\mathcal{X}}_{m, n}$ where the $x_i$ are limited to $0$ and $1$ values corresponds to the no-collision regime\footnote{This regime is quite relevant in practice in experimental setups, since threshold detectors are more readily available than photon number resolving detectors.} in linear optics, valid when $m >> n^2$. We can now state our main proposition.

%indeed, if the number of modes in a linear optical experiment is such that $m >> n^2$, the probability that there are two or more photons are in the same mode is not large and we can consider that we only have $1$ or $0$ photons per mode, especially in the limit of large $m$ and $n$. This regime is quite relevant in practice in experimental setups, since threshold detectors are more readily available than photon number resolving detectors. 

\begin{proposition}\label{proposition:main}
The MMD loss function for distributions $p, q$ defined on $\Tilde{\mathcal{X}}_{m, n}$ can be computed with the linear optical observable 
\begin{equation}
O_{\text{MMD, LO}}^{(\sigma)} = \sum_{\k} (1-p_{\sigma})^{m-\vert \k \vert}p_{\sigma}^{\vert \k \vert} \Tilde{W}^{\k} \otimes \Tilde{W}^{\k}
\nonumber
\end{equation}
in the no-collision regime.
\end{proposition}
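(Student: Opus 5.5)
We will show that the linear optical observable is diagonal in the Fock basis and has, on no-collision states, exactly the same eigenvalues as the qubit observable $O_{\text{MMD}}^{(\sigma)}$ of \cite{rudolph2023trainability}. The proposition then follows from their result.

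\emph{Step 1: diagonalize $\Tilde{W}^{\k}$.} The single-particle operator $W^{\k}$ is diagonal with entries $l_i=(-1)^{k_i}$. By the eigenfunction (\ref{eq:eigenvalues_lambda}), its extension $\Tilde{W}^{\k}$ is diagonal in the Fock basis, with
\[
\Tilde{W}^{\k}\ket{s}=\prod_i (-1)^{k_i s_i}\ket{s}.
\]
For $\s\in\Tilde{\mathcal{X}}_{m,n}$ we have $s_i\in\{0,1\}$. Identifying $\ket{s_1,\dots,s_m}$ with the computational basis state $\ket{s_1\cdots s_m}$ of $m$ qubits, $\Tilde{W}^{\k}$ therefore acts on these states exactly as the Pauli string $Z^{\k}$.

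\emph{Step 2: restrict to the no-collision subspace.} For $p,q$ supported on $\Tilde{\mathcal{X}}_{m,n}$, the relevant states $\rho,\rho'$ (the model state after photon counting, or the diagonal state encoding the data) are supported on, or are diagonal in, the span of $\{\ket{\x}:\x\in\Tilde{\mathcal{X}}_{m,n}\}$. Step 1 then gives, on $\mathcal{H}_{nc}\otimes\mathcal{H}_{nc}$,
\[
O_{\text{MMD, LO}}^{(\sigma)}\big|_{nc}=O_{\text{MMD}}^{(\sigma)}\big|_{nc}.
\]
By the identity of \cite{rudolph2023trainability}, the right-hand side equals $\sum_{\x,\y}K_\sigma(\x,\y)\ket{\x}\bra{\x}\otimes\ket{\y}\bra{\y}$.

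\emph{Step 3: recover the MMD.} We verify directly that $\sum_{\k}\mathcal{P}_\sigma(\k)\,(-1)^{\k\cdot(\x+\y)}=K_\sigma(\x,\y)$ for binary $\x,\y$. The sum factorizes over modes:
\[
\prod_i\bigl[(1-p_\sigma)+p_\sigma(-1)^{x_i+y_i}\bigr].
\]
Each factor equals $1$ if $x_i=y_i$ and $1-2p_\sigma=e^{-1/(2\sigma^2)}$ otherwise. The product is therefore $e^{-\|\x-\y\|^2/(2\sigma^2)}$, because $\|\x-\y\|^2$ equals the Hamming distance for binary vectors. Taking $\mathrm{Tr}[O(\rho\otimes\rho')]$ with $\rho,\rho'\in\{\rho_p,\rho_q\}$ reproduces the three terms of (\ref{eq:truemmd2}). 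Equivalently,
\[
\text{MMD}^2=\mathbb{E}_{\k}\Bigl[\bigl(\langle\Tilde{W}^{\k}\rangle_p-\langle\Tilde{W}^{\k}\rangle_q\bigr)^2\Bigr].
\]

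\emph{Main obstacle.} The delicate point is the no-collision hypothesis. For a general Fock state with $s_i\ge2$, the eigenvalue $(-1)^{k_is_i}$ depends only on the parity of $s_i$. The factorization in Step 3 then gives $\prod_i$ over parity mismatches of $s_i-s'_i$ rather than $e^{-(s_i-s'_i)^2/(2\sigma^2)}$, so the observable no longer encodes the Gaussian kernel. The argument must therefore state explicitly that the model distribution is (post-selected, or approximately, for $m\gg n^2$) supported on $\Tilde{\mathcal{X}}_{m,n}$. It must also state that the expectation values $\langle\Tilde{W}^{\k}\rangle$ are computed in that regime, for example via Eq. (\ref{eq:expectation_q}) with $\|W^{\k}\|_2=1$, so that Gurvits' estimator applies.
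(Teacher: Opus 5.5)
Your proof is correct and follows essentially the same route as the paper: both read off the Fock-basis eigenvalues $\prod_i(-1)^{k_i x_i}$ of $\Tilde{W}^{\k}$ and reduce the claim to the identity $\sum_{\k}(1-p_\sigma)^{m-|\k|}p_\sigma^{|\k|}(-1)^{\k\cdot(\x+\y)}=K_\sigma(\x,\y)$ for binary $\x,\y$, which is exactly where the no-collision hypothesis enters. The differences are cosmetic: you check this identity by factorizing over modes (framed as restricting the qubit $Z^{\k}$ observable to the no-collision subspace), whereas the paper rewrites $(-1)^{k_i(x_i+y_i)}=(-1)^{k_i(x_i-y_i)}$ and invokes Walsh--Hadamard inversion of the Gaussian kernel, which is the form that carries over directly to Proposition~\ref{proposition:kernel}.
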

\begin{proof}
We focus on the term of the MMD where both distributions come from the QCBM, as the classical terms are straightforward. Using expression (\ref{eq:truemmd2}), we wish to show that:
\begin{align}
\bra{\psi_\theta} O_{\text{MMD, LO}}^{(\sigma)} \ket{\psi_\theta} =  \sum_{\x, \y \in \Tilde{\mathcal{X}}_{m, n}} q(\x) q(\y) K_\sigma(\x, \y)
\label{eq:need_to_show}
\end{align}
Let us call this term $\text{MMD}_{q_\theta, q_\theta}$. Since the quantum circuit prepares state $\ket{\psi_\theta} = \Tilde{U}_\theta \ket{s}$, we have that, expanding $O_{\text{MMD, LO}}^{(\sigma)}$:
\begin{align*}
    \text{MMD}_{q_\theta, q_\theta} = \sum_{\k} & (1 - p_\sigma)^{m - |\k|} p_\sigma^{|\k|} \\ & . \bra{s} \Tilde{U}_\theta^\dagger \Tilde{W}^{\k} \Tilde{U}_\theta \ket{s} \bra{s} \Tilde{U}_\theta^\dagger \Tilde{W}^{\k} \Tilde{U}_\theta \ket{s}
\end{align*}
Here, 
\begin{align*}
\bra{s} \Tilde{U}_\theta^\dagger \Tilde{W}^{\k} \Tilde{U}_\theta \ket{s}  = \left( \sum_{x \in \Phi_{m, n}} \lambda^{\k}(x) P_{U_\theta}(x|s) \right),
\end{align*}
and, following equation (\ref{eq:eigenvalues_lambda}) and the form of $W^{\k}$:
\begin{align*} 
\lambda^{\k}(x) = \prod_{i = 1}^{m} \left( (-1)^{k_i}\right)^{x_i}.
\end{align*}
So, the expression becomes:
\begin{align*}
\text{MMD}_{q_\theta, q_\theta}
&= \sum_{x, y \in \mathcal{X}_{m, n}} q_{\theta}(x) q_{\theta}(y)  \\
&\quad . \sum_{\k} \Big((1 - p_\sigma)^{m - |\k|} p_\sigma^{|\k|}
      \prod_i (-1)^{k_i x_i + k_i y_i}\Big)
\end{align*}
%\begin{align*}
%    \text{MMD}_{q_\theta, q_\theta} = 
%    \sum_{x, y \in \mathcal{X}_{m, n}} q_{\theta}(x) q_{\theta}(y)  \sum_{\k} \Big((1 - p_\sigma)^{m - |\k|} p_\sigma^{|\k|} \\ .\prod_i (-1)^{k_i x_i + k_i y_i}\Big)
%\end{align*}
To recover (\ref{eq:need_to_show}), the sum over $\k$ in the above equation must correspond to the kernel $K_\sigma (\x, \y)$. At this stage, if we assume that we are in the no-collision regime and that $x_i, y_i \in \{0,1\}$, we can rewrite
\begin{align*}
\prod_i (-1)^{k_i(x_i + y_i)} & = \prod_i (-1)^{k_i(x_i - y_i)} \\
& = (-1)^{\sum_i k_i(x_i - y_i)}.
\end{align*}
Then the sum over $\k$ becomes
\begin{align*}
\sum_{\k} (1 - p_\sigma)^{m - |\k|} p_\sigma^{|\k|} (-1)^{\k(\x - \y)}
\end{align*}
which corresponds to the Fourier transform of the Gaussian kernel. More precisely, the Fourier transform is often called the Walsh-Hadamard transform when taken over spaces of bitstrings and is expressed as:
\begin{align*}
\hat{f}(\s)
= \frac{1}{2^m} \sum_{\x \in \{0,1\}^m} f(\x)\, (-1)^{\s \cdot \x}.
\end{align*}
Noting that $K_\sigma(\x,\y) = K_\sigma(\x - \y) = K_\sigma(\z)$, we have that:
\begin{align*}
\hat{f}(\k)
& = \frac{1}{2^m} \sum_{\z \in \{0,1\}^m} K_\sigma (\z)\, (-1)^{\k \cdot \z} \\
& = (1 - p_\sigma)^{m - |\k|} p_\sigma^{|\k|} 
\end{align*}
and thus, taking the inverse Fourier transform:
\begin{align*}
\sum_{\k \in \{0,1\}^m} (1 - p_\sigma)^{m - |\k|} p_\sigma^{|\k|} (-1)^{\k \cdot \z} = K_\sigma(\z).
\end{align*}
We then clearly recover:
\begin{align*}
    \text{MMD}_{q_\theta, q_\theta} = 
    \sum_{\x, \y \in \mathcal{X}_{m, n}} q_{\theta}(x) q_{\theta}(y) K_\sigma (\x, \y).
\end{align*}

\begin{comment}
Since the $k_i$ are independent, we can rewrite the sum over $\k$ as:
\begin{align*}
\prod_{i = 1}^{m} \sum_{k_i = 0}^{1} p_\sigma^{k_i} (1 - p_\sigma)^{1 - k_i}  (-1)^{k_i x_i + k_i y_i},
\end{align*}
where each factor in the product then equals:
\begin{align*}
(1 -  p_\sigma) + p_\sigma (-1)^{x_i + y_i}.
\end{align*}
We have two cases: if $(x_i + y_i) \text{ mod } 2$ is 0, each factor equals $1$, and if $(x_i + y_i) \text{ mod } 2$ is 1, each factor equals $1 - 2p_\sigma = e^{-1/2\sigma^2}$. Then, in the full product, the $e^{-1/2\sigma^2}$ will contribute a number of times equal to  $\sum_{i} (x_i + y_i) \text{ mod } 2$, and the rest of the factors will be equal to $1$, yielding:
\begin{align*}
    \exp\left(- \frac{1}{2\sigma^2} \sum_{i} ((x_i + y_i) \text{ mod } 2)\right).
\end{align*}

At this stage, if we assume that we are in the no-collision regime and that $x_i, y_i \in \{0, 1\}$, this is exactly: 
\begin{align*}
    \exp\left(- \frac{1}{2\sigma^2} \sum_{i = 1}^{m} (x_i - y_i)^2\right),
\end{align*}
which can also be rewritten in terms of the Hamming distance between $\x$ and $\y$:
\begin{align*}
    \exp\left(- \frac{1}{2\sigma^2} d_H(\x, \y)\right).
\end{align*}
This matches definition (\ref{eq:kernel}) of $K_\sigma (\x, \y)$, and thus:
\begin{align*}
    \text{MMD}_{q_\theta, q_\theta} = 
    \sum_{x, y \in \mathcal{X}_{m, n}} q_{\theta}(x) q_{\theta}(y) K_\sigma (\x, \y).
\end{align*}
\end{comment}
\end{proof}

Here, it is important to note that, since operator $W^{\k}$ is originally defined in the single-particle space, each expectation value $\bra{s} \Tilde{U}_\theta^\dagger \Tilde{W}^{\k} \Tilde{U}_\theta \ket{s}$ actually corresponds to a single permanent, as shown in equation (\ref{eq:expectation_q}). This means that the quantum linear optical terms of the MMD can be approximated classically efficiently with Gurvits' algorithm. We detail the exact procedure to follow in the next section.

\subsection{Classical estimation procedure}\label{sec:classical_getmmd}
In \cite{Recio-Armengol2025}, the authors define an unbiased MMD estimator for their classical estimation procedure. We can repurpose this estimator straightforwardly here, noting that the part which corresponds to the classical simulation of an IQP circuit through the den Nest algorithm can be replaced by the classical simulation of a linear optical circuit through the Gurvits algorithm. The method is presented in Figure \ref{fig:mmd_procedure}. %\textbf{Question: should we explicitely prove in an appendix that this remains an unbiased estimator?}

\begin{figure*}[t]
\centering
\begin{tcolorbox}
Sample the following batches: \begin{itemize}
        \item $\mathcal{X} = \{\x_i \sim p \}$ with $\x_i \in \{0, 1 \}^n$, which are the samples from the target dataset,
        \item $\mathcal{K} = \{\k_j \sim \mathcal{P} \}$ with $\k_j \in \{0, 1 \}^n$ to construct the MMD observable, where $\mathcal{P}$ corresponds the Fourier transform of the chosen kernel function for the MMD,
        \item $\mathcal{Z} = \{\z_l \sim U \}$ with $\z_l \in \{-1, 1 \}^n$ sampled uniformly to perform Gurvits' algorithm.
    \end{itemize}
Then, get unbiased estimates of the MMD$^2$ with the following formula: 
    \begin{align}
    \hat{\text{MMD}^2}(\mathcal{X},\mathcal{K}, \mathcal{Z}, \theta) & = \frac{1}{|\mathcal{K}| |\mathcal{Z}|(|\mathcal{Z}|-1)}\sum_{i,j,l\neq j}f(\k_i,\z_j, \theta)f(\k_i,\z_l, \theta) \nonumber \\ 
    &\quad\quad - \frac{2}{|\mathcal{K}| |\mathcal{Z}|  |\mathcal{X}|}\sum_{i,j,l}f(\k_i, \z_j, \theta)(-1)^{\x_l\cdot \k_i} \nonumber \\ 
    &\quad\quad\quad\quad + \frac{1}{|\mathcal{K}| |\mathcal{X}|( |\mathcal{X}|-1)}\sum_{i,j,l\neq j}(-1)^{\x_j\cdot \k_i}(-1)^{\x_l\cdot \k_i}, \label{eq:mmdestimate_us}
    \end{align}
    with the Glynn estimator
    \begin{align}
    f(\k, \z,\theta) & = \z_1 \cdots \z_n \prod_{i = 1}^{n} (q_{i, 1} \z_1 + \cdots + q_{i, n} \z_n),
    \end{align} 
    where the $q_{i, j}$ are the entries of matrix $Q_{\k, \theta}^{s, s}$, which is a submatrix of $Q_{\k, \theta} = U_\theta^\dagger W^{\k} U_\theta$  formed by taking $s_i$ times the $i$th row of $Q_{\k, \theta}$, then $s_j$ times the $j$th column of that intermediate matrix, given a choice of input state $\ket{s}$. Matrix $Q_{\k, \theta}$ is obtained by multiplying the parametrized unitary $U_\theta$ that represents the ansatz with the operator $W^{\k} = \text{diag} \left( (-1)^{k_1}, \cdots, (-1)^{k_m} \right)$.
\end{tcolorbox}
\caption{Estimation of the MMD for a linear optical QCBM.}
\label{fig:mmd_procedure}
\end{figure*}
For a given set of parameters $\theta$, the time complexity of estimating the MMD loss thus scales as $O(n^2/\epsilon_k^2 \epsilon_z^2$) if we take  $O(1/\epsilon_k^2)$ samples for set $\mathcal{A}$ and $O(1/\epsilon_z^2)$ samples for set $\mathcal{Z}$, resulting in a precision of $O(\epsilon_z \epsilon_k)$. Additional complexity polynomial in $m$ comes from constructing matrix $Q_{\k, \theta}$. %\textbf{Question: is this last statement correct?} 

Regarding precision, we note that the MMD as a classical ML metric is already subject to a limited precision due to the finite number of samples used in its estimation. In our case, the following effects influence the precision: finite number of samples for sets $\mathcal{X}, \mathcal{K}, \mathcal{Z}$ (target dataset, MMD observable, Gurvits estimate), as well as some imprecision when the no-collision assumption is not strictly respected. In our numerical tests, we found that these effects did not prevent the MMD to serve its purpose as a loss function, in terms of convergence and numerical stability.

\subsection{Kernel choice}\label{sec:kernel_choice}
Recall that in the proof of Proposition \ref{proposition:main}, the Fourier transform of the Gaussian kernel appeared and that it matches the $\mathcal{P}_{\sigma}(\k)$ of expression (\ref{eq:prob_p_sigma}). By choosing a $\mathcal{P}(\k)$ that is the Fourier transform of another kernel function, we can obtain the MMD observable for that kernel, and the proof will follow in the same way, provided that the kernel is stationary, i.e. $K(\x,\y) = K(\x - \y) = K(\z)$. This was first proven in the context of IQP circuits by the authors of \cite{kurkin2025universalitykerneladaptivetrainingclassically}. 

\begin{proposition}\label{proposition:kernel}
The MMD loss function for distributions $p, q$ defined on $\Tilde{\mathcal{X}}_{m, n}$ can be computed in the no-collision regime with the linear optical observable 
\begin{equation}
O_{\text{MMD, LO}} = \sum_{\k} \mathcal{P}(\k)\Tilde{W}^{\k} \otimes \Tilde{W}^{\k},
\nonumber
\end{equation}
where $\mathcal{P}(\k)$ is the Fourier transform of a kernel function $K(\x, \y)$ that is stationary and bounded.
\end{proposition}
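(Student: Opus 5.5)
We follow the proof of Proposition \ref{proposition:main} step by step and replace the Gaussian-specific Fourier coefficients with a general $\mathcal{P}(\k)$. As before, the purely classical terms of the MMD need no new argument. We therefore treat the model--model term $\bra{\psi_\theta}\otimes\bra{\psi_\theta} O_{\text{MMD, LO}} \ket{\psi_\theta}\otimes\ket{\psi_\theta}$; the mixed term is handled the same way, with one copy replaced by a diagonal data state.

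The first step is to expand $O_{\text{MMD, LO}}$ and use equation (\ref{eq:eigenvalues_lambda}). Since $W^{\k}$ is diagonal in the single-particle space, $\Tilde{W}^{\k}$ is diagonal in the Fock basis with eigenvalue $\lambda^{\k}(\x) = \prod_i (-1)^{k_i x_i}$. This gives
\begin{equation*}
\sum_{\x,\y} q_\theta(\x) q_\theta(\y) \sum_{\k} \mathcal{P}(\k) (-1)^{\k\cdot(\x+\y)} ,
\end{equation*}
which is the same algebra as in Proposition \ref{proposition:main}, with no property of the kernel used yet.

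The second step imposes the no-collision regime, $\x,\y\in\{0,1\}^m$. Then $(-1)^{k_i(x_i+y_i)} = (-1)^{k_i(x_i - y_i)}$, and $\x-\y$ may be replaced by $\z = \x\oplus\y \in\{0,1\}^m$. The inner sum becomes $\sum_{\k}\mathcal{P}(\k)(-1)^{\k\cdot\z}$. Here stationarity is used: since $K(\x,\y) = K(\z)$, $K$ is a function on $\{0,1\}^m$, so its Walsh--Hadamard transform $\mathcal{P}(\k) = 2^{-m}\sum_{\z}K(\z)(-1)^{\k\cdot\z}$ is well defined. The inversion formula on the group $\mathbb{Z}_2^m$ holds for every function on this finite group, not only the Gaussian, so $\sum_{\k}\mathcal{P}(\k)(-1)^{\k\cdot\z} = K(\z)$. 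This recovers $\sum_{\x,\y} q_\theta(\x)q_\theta(\y)K(\x,\y)$, which is the desired identity.

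The main obstacle is making precise what ``stationary'' means on bitstrings. The natural reading is that $K$ depends only on $\x\oplus\y$, equivalently on $|x_i-y_i|$ coordinatewise. I would state this explicitly and check that it agrees with $K(\x-\y)$ on $\{0,1\}^m$.

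A secondary point is the role of boundedness. On the finite domain $\{0,1\}^m$, boundedness is automatic for the identity itself; its real purpose is to support the interpretation and the estimation procedure of Figure \ref{fig:mmd_procedure}. When $K$ is positive definite, Bochner's theorem on $\mathbb{Z}_2^m$ gives $\mathcal{P}(\k)\ge 0$, and $\sum_{\k}\mathcal{P}(\k) = K(\mathbf{0})$. Hence, after normalising by $K(\mathbf{0})$, $\mathcal{P}$ is a probability distribution from which the vectors $\k$ can be sampled. Each resulting observable satisfies $\|W^{\k}\|_2 = 1$, so Gurvits' algorithm remains efficient for each term. I would add a remark that if $K$ is not positive definite, one instead samples from $|\mathcal{P}|/\|\mathcal{P}\|_1$ with signed weights.
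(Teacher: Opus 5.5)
Your proposal is correct and follows the paper's route. The paper's justification is the single remark that the proof of Proposition~\ref{proposition:main} goes through unchanged once the Gaussian coefficients $\mathcal{P}_\sigma(\k)$ are replaced by the Walsh--Hadamard transform of a general stationary kernel, which is exactly your argument via inversion on $\mathbb{Z}_2^m$. Your added precisions are sound refinements of that one-line justification:
\begin{itemize}
\item You read stationarity as dependence on $\x\oplus\y$. This is what the inversion actually requires, and it is slightly stronger than dependence on $\x-\y\in\{-1,0,1\}^m$.
\item You note that boundedness is automatic on a finite domain.
\item You use Bochner's theorem on $\mathbb{Z}_2^m$ to obtain $\mathcal{P}(\k)\ge 0$ for sampling.
\end{itemize}
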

Different kernels can be better suited for different types of data, and the kernel can be selected according to the dataset, or as part of a hyperparameter optimization. In this work, we consider two polynomial kernels (standard and parity polynomial), a weighted Gaussian kernel, as well as kernels directly defined by their Walsh or Fourier coefficients: one which we call Walsh low-order kernel, and a data-biased alternative. We define and detail those in Appendix \ref{app:kernels}.

\section{Ans{\"a}tze and initialization strategies in linear optical circuits}

\subsection{Interferometer choice}\label{sec:ansatze} 
Recall that our model, as a photon-native algorithm, directly exploits linear optical elements like beam-splitters and phase shifters instead of the usual hardware-agnostic gates from quantum computing. As explained in Section \ref{sec:generative_model}, any $m \times m$ unitary $U$ that describes the evolution of a single photon over $m$ modes can be decomposed as a product of Mach-Zender elements defined over two modes. This was first proven by Reck et al. in \cite{Reck_1994} with a triangular decomposition. The rectangular decomposition often used in hardware was proposed later by Clements et al. in \cite{Clements_2016}.

The rectangular and triangular meshes are not the only possible universal interferometer architectures as has been demonstrated in recent works \cite{Hamerly_2022, Fldzhyan:20, Basani:23, Crisan:25}. For instance, the 3-MZI architecture resembles the Clements architecture but adds a third 50:50 beam splitter to each standard MZI, enclosing the external phase shift. There is also the butterfly mesh, based on the sine-cosine decomposition of unitaries, that results in a self-similar pattern of MZIs. Though both meshes are universal, they induce markedly different sampling densities on $U(N)$ when compared to Reck and Clements schemes: drawing phase shifts from the same region of parameter space during initialization can lead to strong over- and under- representation of different unitaries, depending on the mesh geometry \cite{Hamerly_2024}.

In addition, we can take another approach and consider a parametrization of an $m \times m$ unitary $U$ which does not directly correspond to hardware, i.e., to an interferometer. This does not cause any issues, since with classical training the model can be trained over those parameters, and when the model is deployed, the trained unitary can be decomposed into any mesh which matches the available hardware. In our code, we implement a parametrization where a vector of real parameters is first mapped to a general complex matrix, on which a QR decomposition is then applied. This parametrization is compatible with the Haar measure, in the sense that it yields Haar-distributed unitaries when the parameters are sampled from an isotropic Gaussian distribution \cite{mezzadri2007generaterandommatricesclassical}.

\subsection{Input state}
The choice of input state $\ket{s}$ is, in a way, part of the ansatz choice. Often in linear optics, state $\ket{1\cdots10\cdots0}$ is selected as an input, but there is no reason why the $1$s cannot be placed in any input modes, as long as there is always a maximum of one photon per mode. We treat input state definition as a hyperparameter over which the model can be tuned.

The procedure presented in Figure \ref{fig:mmd_procedure} applies to simple Fock states as input states. However, classical simulation algorithms also exist \cite{lim2025} for Fock states prepared in superposition \cite{Loredo_2019}, i.e. for observable estimation in superposition boson sampling. Additionally, entangled input states of single photons can be produced experimentally, such as caterpillar states \cite{Huet_2025}, which are considered a crucial step in the creation of photonic cluster states \cite{Lindner_2009}. To our knowledge, there are currently no efficient classical algorithms for the estimation of observables in this last scenario, which is sometimes referred to as entangled boson sampling. It would be interesting to study the conditions under which classical simulation may be possible. Overall, the use of different input states will allow us to relax the constraint of fixed Hamming weight in the generated data, and would likely improve the expressivity of the model.

\subsection{Warm starts and initialization strategies}\label{sec:inits}
In both quantum and classical variational algorithms, it is well known that careful initialization can greatly improve the performance of an algorithm \cite{skorski2021revisiting, narkhede2021review, Zhang_2022}. We therefore propose several initialization strategies that depend on the choice of the interferometer mesh, building on approaches studied in the literature \cite{Grant_2019, Wang_2024}. In particular, we look for linear-optical counterparts to the strategies studied for IQP circuits in \cite{lerch2026iqpbornmachinesdatadependent}.

\paragraph{Near-identity initialization:}
This first strategy has been studied extensively in the literature and initializes the interferometer close to  identity, with small perturbations up to a chosen maximal value. For rectangular
interferometers and butterfly meshes, this corresponds to initializing the angles \(\phi\) and \(\theta\) close to \(0\). For 3-MZI meshes, the angles \(\phi\) and \(\theta\) are instead initialized close to \(\pi/2\). For the Haar-compatible parametrization, the unitary matrix is initialized close to the identity by adding a small perturbation to the identity matrix. This near-identity initialization is also meaningful in the multi-particle setting: if the interferometer unitary $U_\theta$ is close to the identity at the single-photon level, the induced transformation on the \(n\)-photon sector $\tilde{U}_\theta$ remains close to the identity, with a deviation scaling at most linearly in \(n\). We give a short proof of this statement in Appendix~\ref{app: close to identity}.

\paragraph{Fully random initialization:}
We also consider the strategy of initializing all circuit parameters randomly, where the distribution from which the parameters are drawn depends on the geometry of the chosen ansatz. For mesh-based ansätze, the phase shifts are sampled uniformly according to the natural parameter ranges of the corresponding architecture: for example, between $0$ and $2\pi$ for the phases in the Clements interferometer. For the Haar-compatible ansatz, the parameters define a complex matrix that is mapped to a unitary matrix through QR decomposition.

\paragraph{Data-dependent warm-start:}
In the IQP case, it is possible to choose circuit parameters so that single-qubit marginal distributions match those of the training dataset, and this data-dependent strategy proves to be very effective \cite{Recio-Armengol2025, lerch2026iqpbornmachinesdatadependent}. More precisely, given the target dataset
\(X_{\mathrm{train}}\in\{0,1\}^{N\times m}\), we define the empirical marginals
\[
p_j^{\mathrm{target}}
=
\frac{1}{N}\sum_{\ell=1}^N X_{\ell,j}.
\]
In the linear-optical case, we would have to match these values to the output single-mode marginals \(\widehat p_j(\theta)\). Contrary to the IQP case, it is not clear how to select specific circuit parameters to achieve this correspondence, but we can still obtain them heuristically -- thus making this approach a \emph{warm-start} rather than an initialization strategy.

We detail our method in Appendix \ref{app:data-dependent-init}. In a nutshell, single-mode marginals \(\widehat p_j(\theta)\) can be computed efficiently for given parameters $\theta$ from the first \(n\) columns of the interferometer using the marginal formula highlighted in \cite{Seron_2024} in the context of binned boson-sampling probabilities. As expected, this formula relies on permanents, and in the rank-one case considered here, the permanent can be rewritten using elementary symmetric polynomials \cite{Minc_1984}, which can be computed efficiently using dynamic programming. This creates a pipeline where the warm-start parameters can be obtained by heuristically minimizing
\[
\mathcal L_{\mathrm{marg}}(\theta)
=
\frac12
\sum_{j=1}^m
\left(
\widehat p_j(\theta)-p_j^{\mathrm{target}}
\right)^2.
\]
The resulting values are subsequently used as the initial point for the training of the model.

\paragraph{Unbiased initialization:}
In the case of IQP circuits, an unbiased initialization is defined which sets the model output distribution to a uniform distribution over all possible bit strings, and it yields the best results among data-agnostic strategies \cite{lerch2026iqpbornmachinesdatadependent}. In linear optics, finding an analogue to this initialization is again not straightforward. Finding exact uniformity from a computational point of view is out of reach due to the complexity of computing permanents, which we recall is \(\#P\)-hard in general \cite{aaronson2011}.
Moreover, Aaronson and Arkhipov showed that boson-sampling distributions associated with Haar-random interferometers are typically far from the uniform distribution, even when restricted to collision-free outcomes \cite{aaronson2014}. Thus, a single Haar-random interferometer should not be expected to produce an exactly uniform distribution.

At this stage, we can thus only propose a weaker notion of unbiasedness, which is perhaps closer in nature to random initializations: uniformity in expectation over the initialization. The proof is given in Appendix~\ref{app:unbiased-init}. More precisely, if the interferometer is drawn from the Haar measure, then no collision-free output configuration of fixed Hamming weight is preferred on average. For
\[
\widetilde{\mathcal X}_{m,n}
=
\{\x\in\{0,1\}^m : |\x|=n\},
\]
we have
\[
\mathbb{E}_{U\sim \mathrm{Haar}}[q_U(x)]
=
\frac{1}{\binom{m}{n}},
\qquad
\x\in\widetilde{\mathcal X}_{m,n}.
\]
For the Haar-compatible ansatz, in practice, the difference with the fully random initialization is that the parameters are drawn from a normal distribution instead of a uniform distribution. For the other ans{\"a}tze, the unitary is first obtained through this procedure, then decomposed into interferometer phases to obtain the initial parameters.

\textbf{Unbiased initialization for block structure:}
Interestingly, for some special cases, we can define a proper notion of unbiasedness in initializations. For example, if a dataset exhibits a certain block structure, we can find an initial set of parameters with the unbiasedness property. 

Suppose that the modes are divided into $\ell$ disjoint blocks, and that each data sample contains exactly one photon per block. In the simplest pairwise case, each block has size two and the data have the form: 
\begin{equation*}
|\x\rangle
=
|x_1\rangle\otimes\cdots\otimes |x_\ell\rangle
\otimes |0\cdots 0\rangle,
\quad
x_i\in\{10,01\}.
\end{equation*}
This can correspond for instance to pairwise-choice or one-hot-encoded data, where in each block, the photon encodes one of two possible choices, and extra empty modes may be added to respect the no-collision assumption. 

The idea of the initialization is to start from the fixed input state $|s_{\mathrm{in}}\rangle = |10\rangle^{\otimes \ell}\otimes |0\cdots 0\rangle,$  and to define an ansatz which applies a Hadamard transformation on each pair of modes. This provides a distribution which is uniform over the $2^\ell$ configurations with exactly one photon in each pair: $\{10,01\}^{\otimes \ell}\otimes |0\cdots 0\rangle.$ In the more general case where each block has size $d$, one can start with a state where there is one photon in each block, and apply a discrete Fourier transform $F_d$.

Naturally, this initialization in not uniform over the full collision-free space \(\widetilde{\mathcal X}_{m,n}\), but it is uniform over the structured support defined by the pairwise-choice encoding. So, it only makes sense to use this initialization if the dataset itself has the right structure, and not for a general dataset of fixed Hamming-weight bitstrings.

\paragraph{Bandwidth warm-start:}
Finally, for the case of the Gaussian kernel, we propose an additional warm-start method based on the bandwidth \(\sigma\). We train the model in several stages, each corresponding to a different value of \(\sigma\). The bandwidth determines which orders of correlations contribute most strongly to the MMD: a larger \(\sigma\) emphasizes lower-order operators, whereas a smaller \(\sigma\) gives more weight to higher-order operators. In our simulations, we observed that this strategy can improve the final value of the loss function, although the differences often remain below one order of magnitude.

\section{Datasets and applications}
Since our linear optical setup generates fixed Hamming-weight bitstrings, it is natural to aim at modelling fixed Hamming-weight bitstrings, instead of post-processing the circuit's output samples or mapping them to another set. In other words, we make use of the boson sampling inductive bias. This means that data points from the target datasets should be vectors $\x = (x_1, \dots x_m)$ where $\sum_i {x_i} = n$ and $x_i \in \{0, 1\}$. We propose datasets that match this format, and publish code to prepare or generate them.

\subsection{Boson sampling datasets}
As a first dataset to test the validity of our method, we generate boson sampling datasets using the Perceval \cite{Heurtel_2023} package. In particular, we make use of the CliffordClifford2017 backend based on \cite{CliffordClifford}, which is appropriate for sampling simulation. The software implementation allows us to go up to a maximum of $m = 256$ modes, and we select $n = 16$ photons. Eventually, generating this data classically will not be possible as $m$ and $n$ scale, but we could imagine a learning problem where the data comes from samples obtained on hardware.

\subsection{User preference datasets}
A type of problem that often comes up in data science concerns user preferences: for example, how does a user rank the films they have seen on a streaming platform, or the objects they have bought on an online marketplace? This data may be used in recommendation systems, or for grouping users in clusters and providing them with different client support or marketing strategies. To match the fixed Hamming-weight bitstrings requirement, we rephrase the question slightly and consider a user's $n$ favourite elements among a list of $m$ items. 

It turns out that a quite specific dataset exists for $m = 100$ and $n = 10$ within PrefLib: a library for preferences \cite{Mattei2013}, which provides 5000 samples of the 10 favourite sushi among a list of 100 sushi options. While this is not exactly an industry-ready use case, we use the opportunity to test our model on a real user preference dataset that remains of reasonable size.

Additionally, we prepared datasets for movie preference for different $m$ and $n$ based on data from the Netflix prize \cite{bennett2007netflix}, which was used historically for benchmarking tasks.

\subsection{Bioinformatics datasets}
Bioinformatics is a field where data often naturally appears in bitstring format, for example to encode DNA, or study the effect of compounds on genes \cite{MOECKEL20242289, Gao2024}. The Connectivity Map (CMap) is a database created by the Broad Institute \cite{Lamb2006, Subramanian2017NextGenerationConnectivity} that catalogs cellular signatures from perturbations with genetic and pharmacologic perturbagens, i.e. stores a description of how gene expression is affected by such compounds. We create datasets by starting from a raw file of compound responses from the CMap platform \cite{CLUE_website}, selecting a so-called gene universe to work with vectors of length $m$, then assigning $1$s to the $n$ most affected genes, i.e. those that exhibit the largest differential expression for a given compound, while the rest are assigned $0$s.

\subsection{Toy datasets for block structure}
In order to test the special case of the unbiased initialization for block structure presented in Section \ref{sec:inits}, we created two toy datasets. We call them toy datasets since the data requires some artificial manipulation to reach the right structure -- in particular with the padding of extra empty modes to respect the no-collision assumption.

For the first one, we start from an alternative version of the sushi dataset also available on PrefLib \cite{Mattei2013} that indicates a strict ranking between a subset of 10 sushi, we select pairs of sushi and encode the user preference for each pair so that each datapoint has a pairwise-choice structure, and we complete with padding. This toy dataset corresponds to the initialization with the Hadamard transformations applied on each pair of modes.

For the second one, we use a dataset from the UCI Machine Learning repository \cite{mushroom_73} describing mushrooms according to their physical characteristics. The features are categorical, so we use one-hot encoding followed by padding to recover the right data structure. This time, each block is not limited to two modes, and in fact, they each have different sizes. This data is well suited to the initialization where a Fourier transform $F_{d_b}$ is applied on each block of size $d_b$.

\begin{figure*}
	\centering
	\begin{subfigure}{0.4\linewidth}
		\includegraphics[width=\linewidth]{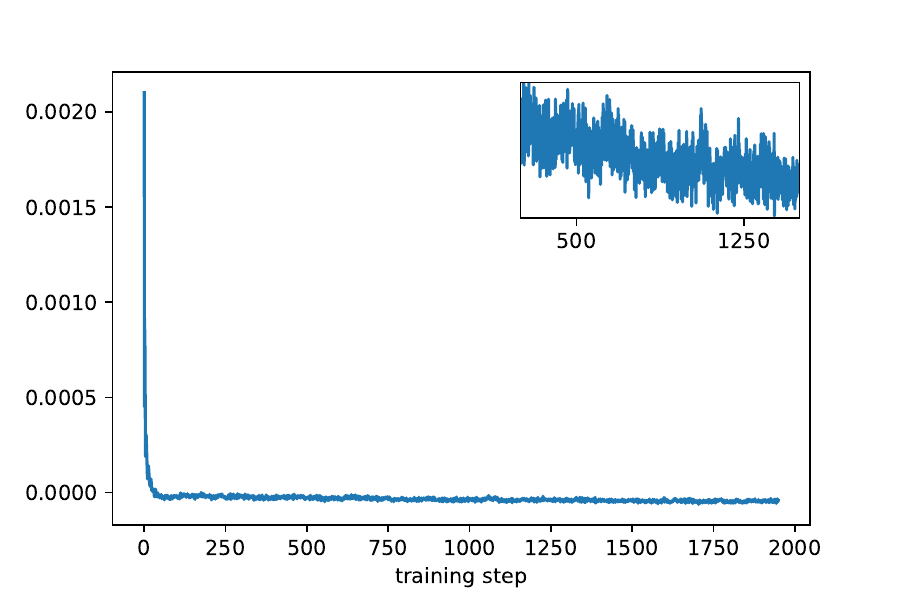}
		\caption{Boson sampling dataset, $m = 256$, $n = 16$, $\sigma = 3.0$, $|\mathcal{K}| = 2000$, $|\mathcal{Z}| = 2000$.}
		\label{fig:loss_evolution_A}
	\end{subfigure}
	\begin{subfigure}{0.4\linewidth}
		\includegraphics[width=\linewidth]{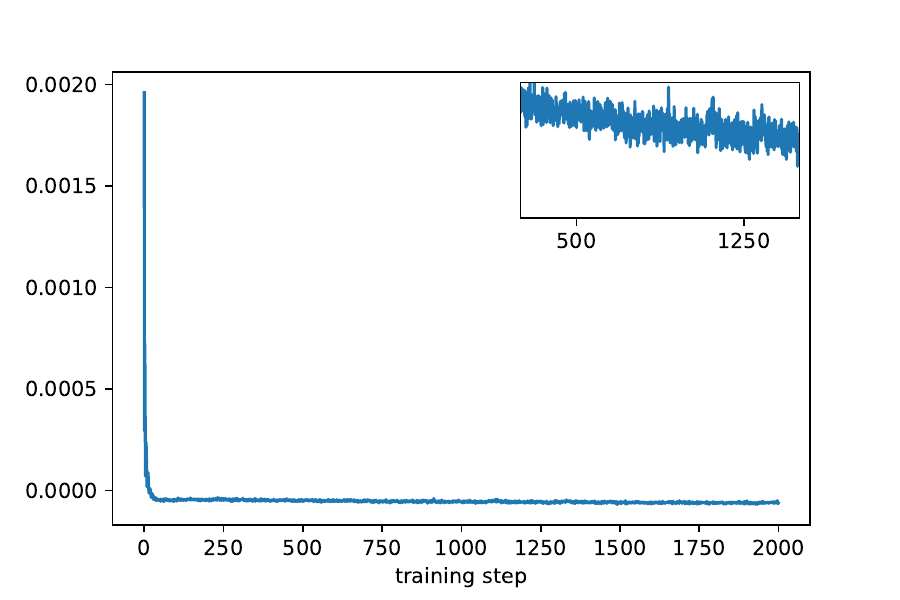}
		\caption{Boson sampling dataset, $m = 256$, $n = 16$, $\sigma = 3.0$, $|\mathcal{K}| = 5000$, $|\mathcal{Z}| = 5000$.}
		\label{fig:loss_evolution_B}
	\end{subfigure}
	\caption{Subfigures (a) and (b) compare the MMD loss curve for different sizes of sets $\mathcal{K}$ and $\mathcal{Z}$ from the estimation procedure: when the sets are smaller, the loss has slightly larger variations.}
	\label{fig:loss_evolution}
\end{figure*}

\section{Numerical experiments}\label{sec:tests}

\begin{figure*}
	\centering
    \begin{subfigure}{0.45\linewidth}
	    \includegraphics[width=\linewidth]{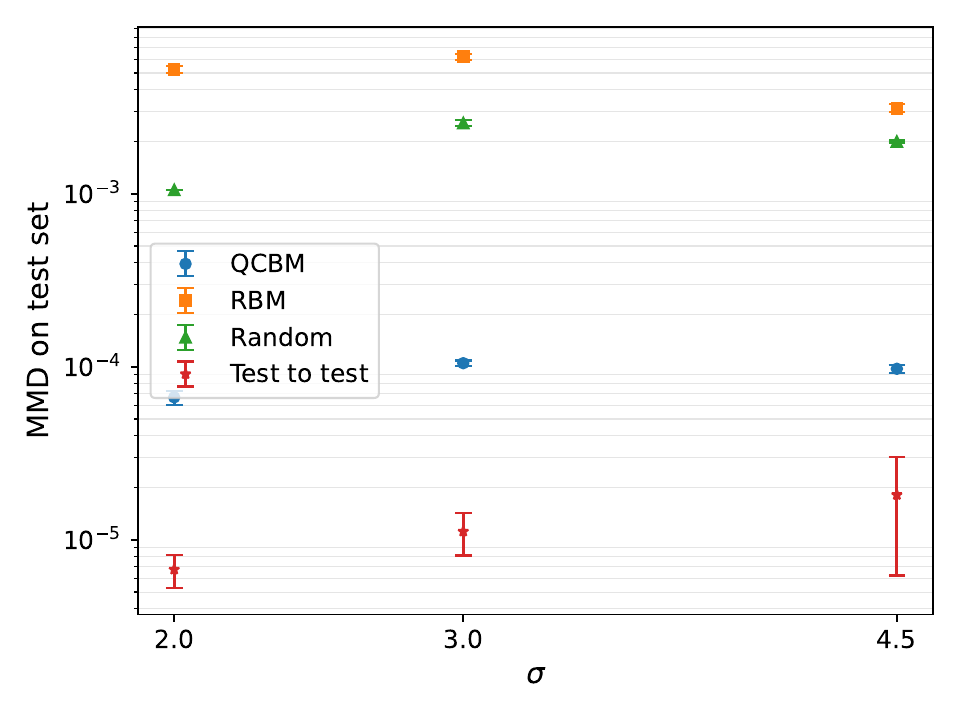}
            \captionsetup{width=0.9\linewidth}
	        \caption{Boson sampling dataset, $m = 256$, $n = 16$, near-identity initialization, Haar-compatible ansatz, comparing  different bandwidths for the Gaussian kernel.}
	    \label{fig:numerical_results_B}
    \end{subfigure}
    \begin{subfigure}{0.45\linewidth}
        \includegraphics[width=\linewidth]{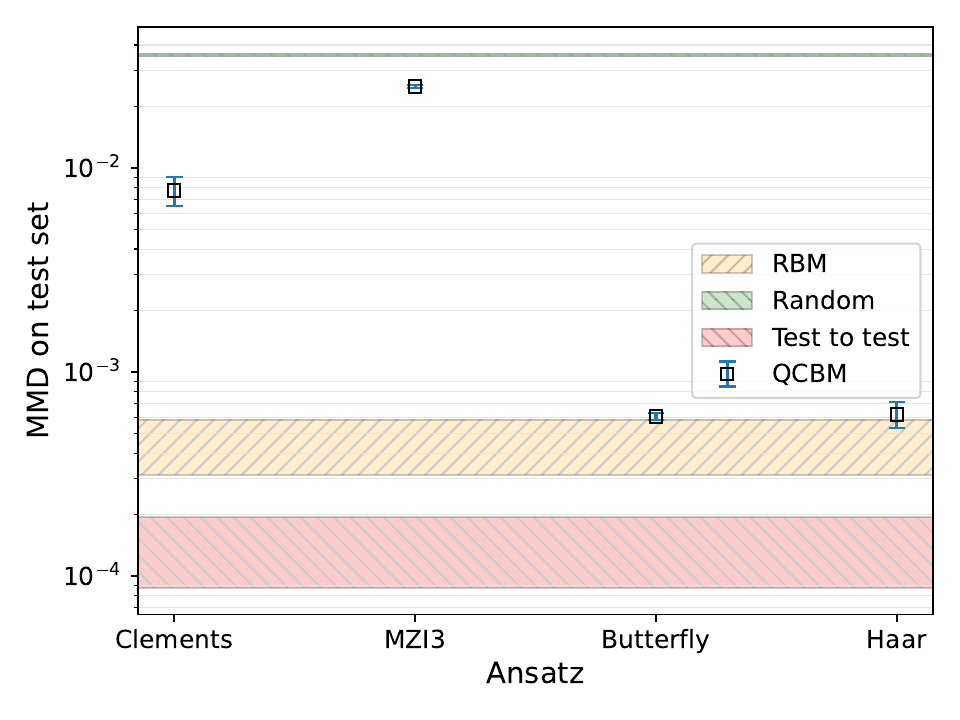}
	        \captionsetup{width=0.9\linewidth}
            \caption{User preference (sushi) dataset, $m = 100$, $n = 10$, Gaussian kernel for $\sigma = 3.0$, near-identity initialization, comparing different ans{\"a}tze.}
	    \label{fig:numerical_results_A}
    \end{subfigure}
         \begin{subfigure}{0.45\linewidth}
	        \includegraphics[width=\linewidth]{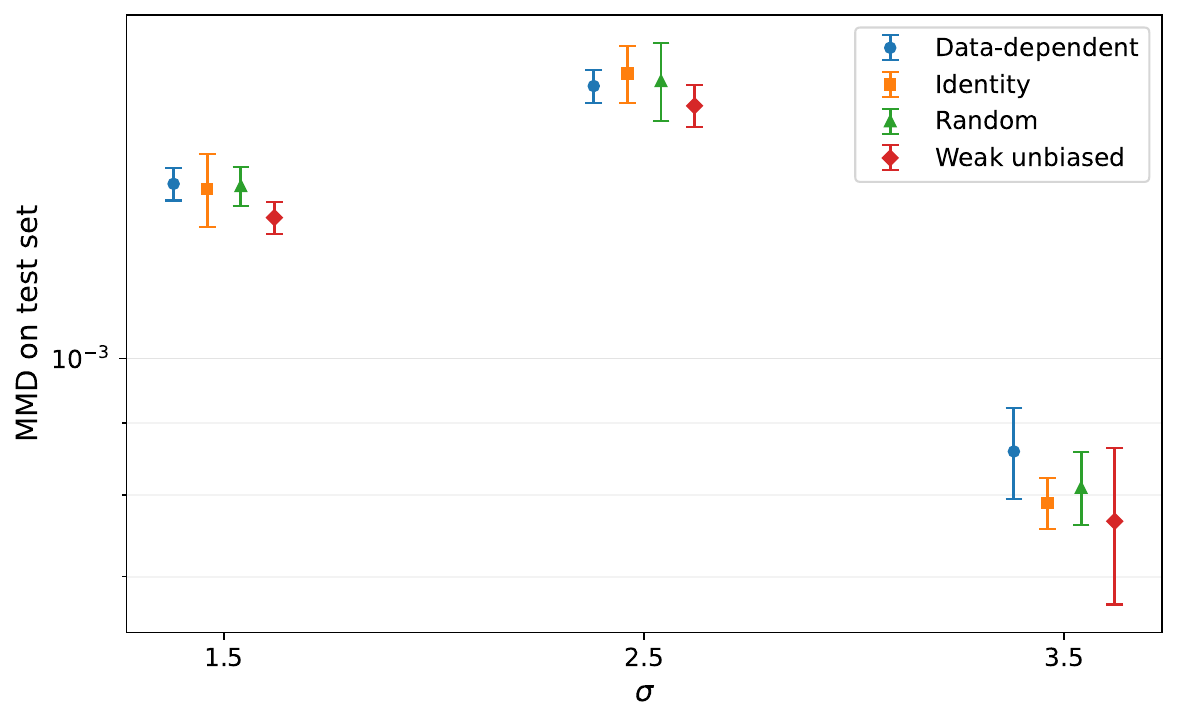}
	        \captionsetup{width=0.9\linewidth}
            \caption{User preference (movies) dataset, $m = 100$, $n = 10$, for the Haar-compatible ansatz, comparing different initialization strategies and bandwidths for the Gaussian kernel.}
	        \label{fig:numerical_results_C}
         \end{subfigure}
                  \begin{subfigure}{0.45\linewidth}
	        \includegraphics[width=\linewidth]{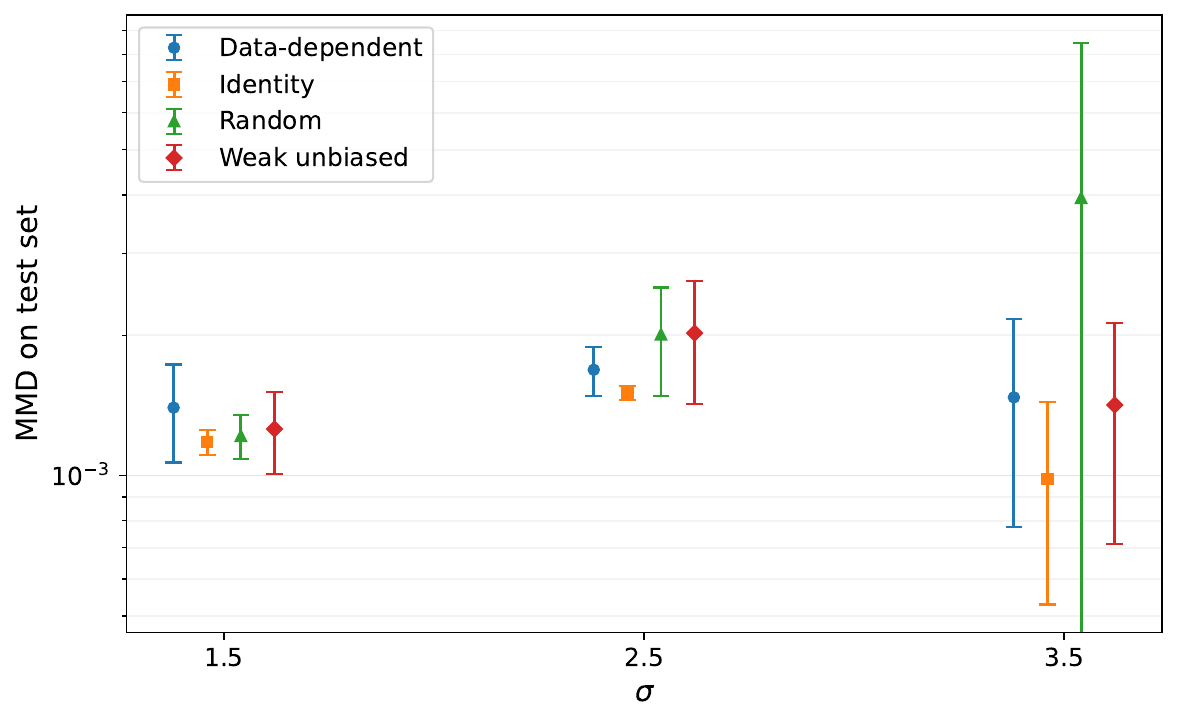}
	        \captionsetup{width=0.9\linewidth}
            \caption{User preference (movies) dataset, $m = 100$, $n = 10$, for the butterfly ansatz, comparing different initialization strategies and bandwidths for the Gaussian kernel.}
	        \label{fig:numerical_results_D}
         \end{subfigure}
        \begin{subfigure}{0.45\linewidth}
	        \includegraphics[width=\linewidth]{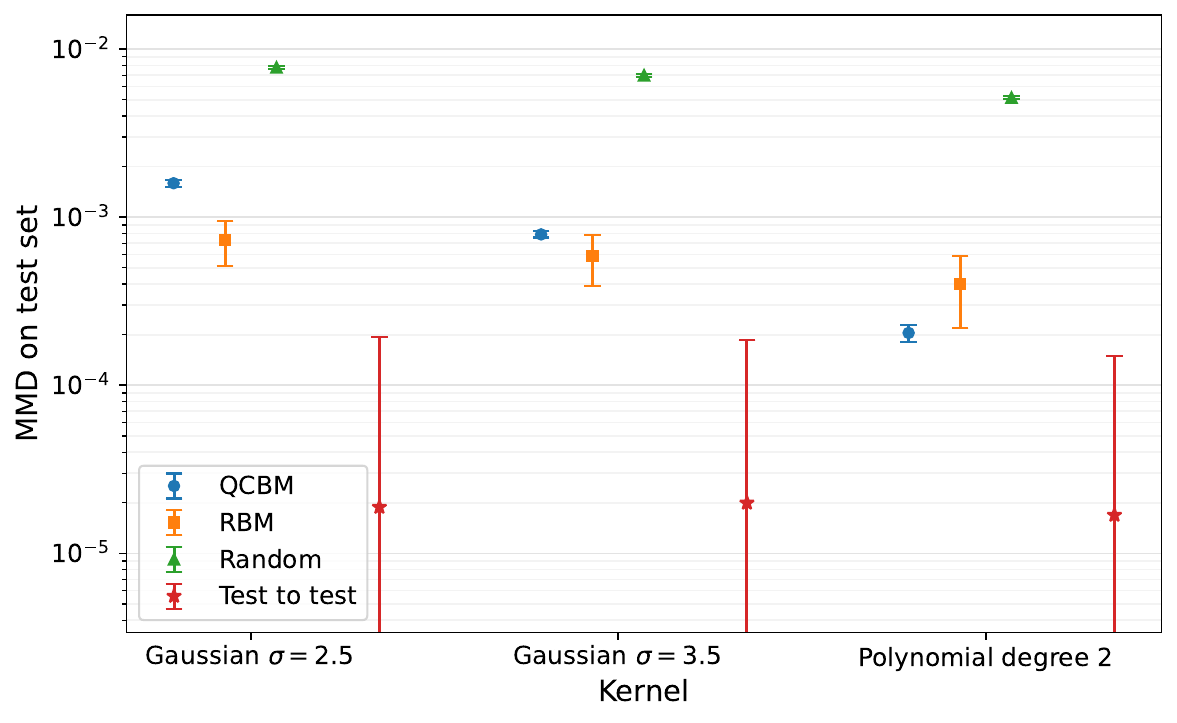}
	        \captionsetup{width=0.9\linewidth}
            \caption{User preference (movies) dataset, $m = 100$, $n = 10$, Haar-compatible ansatz, comparing Gaussian kernels to a polynomial kernel.}
	        \label{fig:numerical_results_E}
         \end{subfigure}
         \begin{subfigure}{0.45\linewidth}
	        \includegraphics[width=\linewidth]{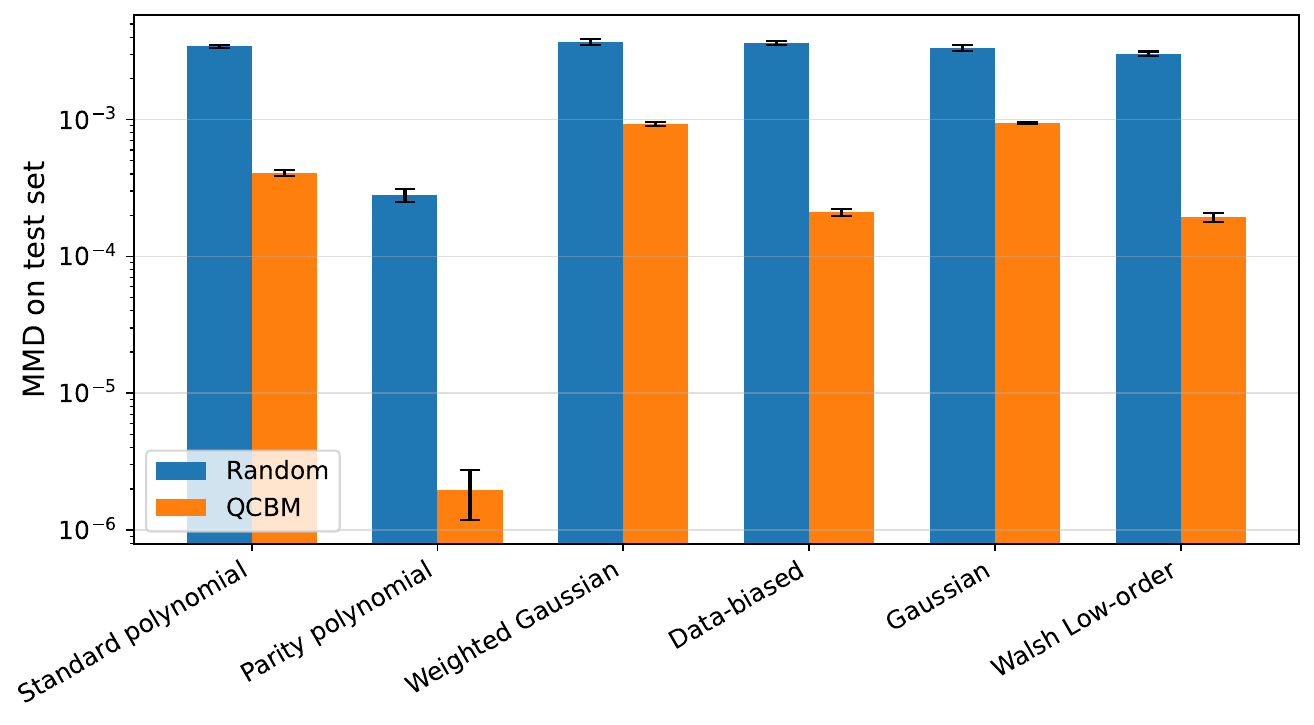}
	        \captionsetup{width=0.9\linewidth}
            \caption{Bioinformatics dataset, $m = 100$, $n = 10$, Haar-compatible ansatz, comparing different kernel choices with the random benchmark.}
	        \label{fig:numerical_results_F}
         \end{subfigure}
	\caption{Comparison of MMD values on test sets for the quantum model and the classical benchmarks, over various datasets, configurations and hyperparameter settings.  Subfigure (a) corresponds to the boson sampling dataset, where the quantum model performed best. Subfigure (b) shows that the butterfly and Haar-compatible ans{\"a}tze performed better, a trend we observed over most configurations and datasets. Subfigures (c) and (d) compare initialization strategies for the butterfly and the Haar-compatible ans{\"a}tze, and show that the butterfly interferometer seems more sensitive to initialization choice. Subfigures (e) and (f) illustrate the effect of kernel choice: overall, we observe that the polynomial kernel performs better. Error bars are obtained from repeating the simulations five times each, and in the case of the test-to-test MMD, by shuffling the test set.}
	\label{fig:numerical_results}
\end{figure*}

\begin{figure*}
	\centering
         \begin{subfigure}{0.45\linewidth}
	        \includegraphics[width=\linewidth]{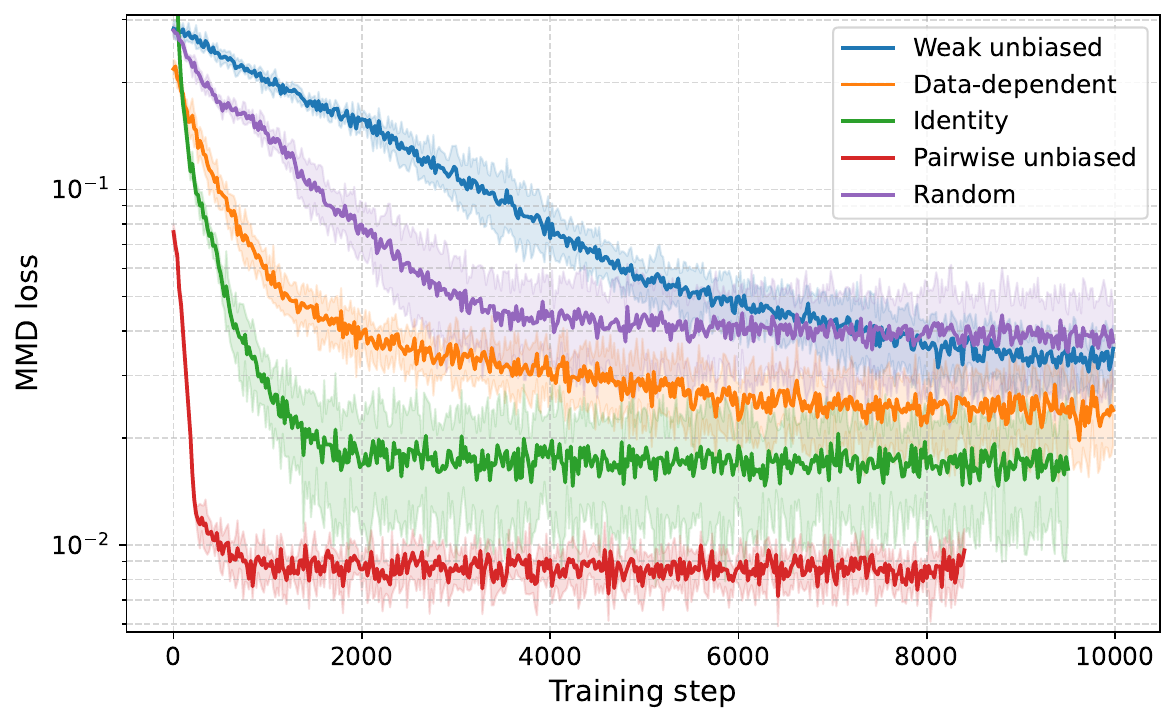}
	        \caption{Pairwise sushi dataset.}
	        \label{fig:init_A}
         \end{subfigure}
                  \begin{subfigure}{0.45\linewidth}
	        \includegraphics[width=\linewidth]{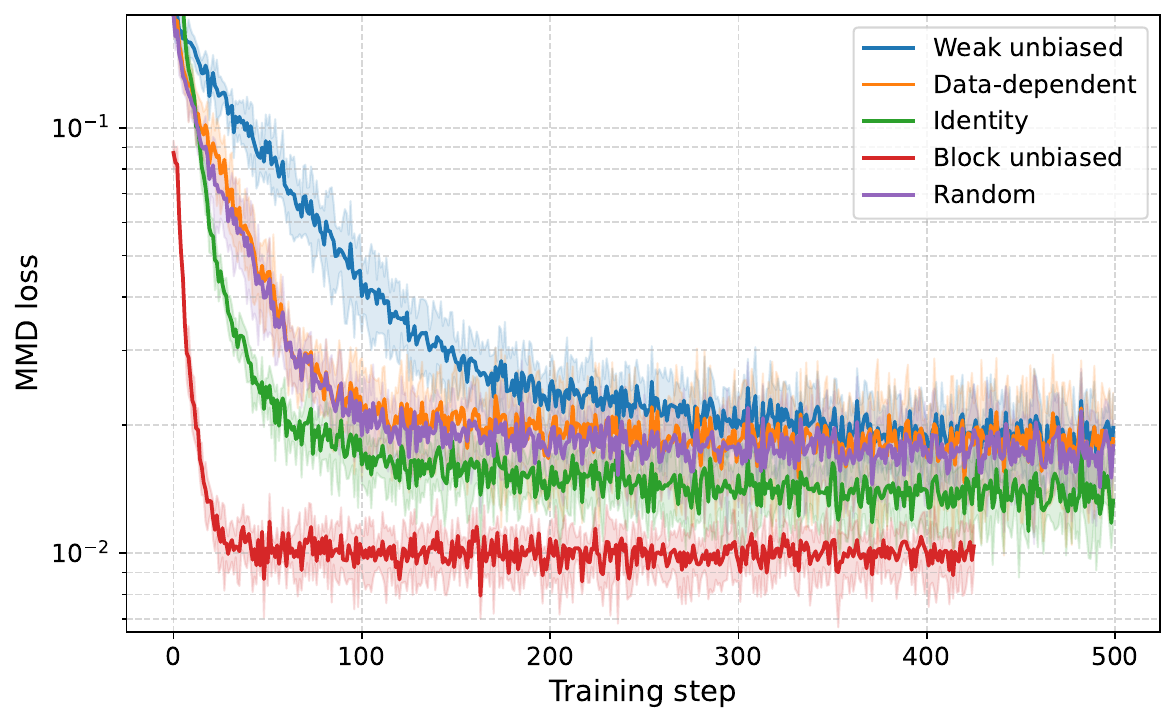}
	        \caption{Mushroom dataset.}
	        \label{fig:init_B}
         \end{subfigure}
	\caption{Exploring initialization strategies for the special case of data with block structure: loss curves for the sushi and mushroom toy datasets. Gaussian kernel with $\sigma = 2.0$, averaged over five training instances. Training ends either when a pre-set maximal number of steps or convergence criterion is reached.}
	\label{fig:pairwise}
\end{figure*}

We implemented the procedure from Figure \ref{fig:mmd_procedure} using Python and JAX. We relied on JAX's ability to automatically differentiate native Python and Numpy functions to carry out the optimization of the model. We used the Adam optimizer for gradient descent. As an example, a 2000-step training of a model with $m = 100$ modes and $n = 10$ photons takes about $2$ hours to complete on a laptop, with $3$ to $4$ seconds per step.

In Figure \ref{fig:loss_evolution}, we illustrate that training behaves as expected and that the loss function decreases smoothly. We also show the effect of choosing different sizes for sets $\mathcal{K}$ and $\mathcal{Z}$: the oscillations are more marked when the precision is lower. 

We set up a hyperparameter tuning framework using Ray Tune \cite{liaw2018tuneresearchplatformdistributed}, in order to explore many configurations including different kernel choices, ansatz choices, and initialization strategies. We summarise our findings in Figure \ref{fig:numerical_results}. We compare our models against two classical benchmarks: a simple restricted Boltzmann machine (RBM) like the one used in \cite{Recio-Armengol2025}, and a uniform random model that generates fixed Hamming-weight bitstrings. We also include the value of the MMD between two halves of the test set, as a ground truth ``test to test'' baseline. 

In terms of overall performance in the sense of outperforming our classical benchmarks, we observed that the QCBM performed best for the boson sampling dataset, a trend which remained valid over all configurations we explored. We see in Figure \ref{fig:numerical_results_B} that the quantum model clearly outperforms the classical benchmarks and reaches values much closer to the test-to-test ground truth MMD, for different values of the Gaussian kernel bandwidth. This could be expected, since both the model and the generator that produced the data are boson samplers, so the model naturally has the right inductive bias for such a dataset. 

In a sense, this confirms the guideline often repeated in QML, that quantum models are best suited for quantum data, or data obtained from quantum processes \cite{Huang_2021}. While the boson sampling dataset is not as clearly related to an application as the other datasets, this is an interesting result in its own right, and it indicates that results about quantum data obtained in the qubit picture do apply to the linear-optical or photon-native picture.

When exploring unitary parametrizations or ansatz choice, our simulations generally showed that the butterfly and Haar-compatible ans{\"a}tze yielded a better performance for our model. This seems logical given that the Clements and 3-MZI ans{\"a}tze tend to limit the propagation of photons over all the modes, especially for near-identity initializations, for example. We illustrate this in Figure \ref{fig:numerical_results_A} for the user preference (sushi) dataset. 

When exploring initializations, we did not observe large gaps in performance between strategies. Surprisingly, the data-dependent warm-start did not yield a much better performance than the others, perhaps because of imprecisions resulting from our heuristic method, in comparison with the exact initialization in the case of IQP circuits. We did observe an interplay with other hyperparameters: for instance, the butterfly ansatz with a larger bandwidth for the Gaussian kernel showed a high variability in performance, especially for the random initialization, as displayed in Figures \ref{fig:numerical_results_C} and \ref{fig:numerical_results_D}. This indicates that, when training over a new dataset or with a new kernel or ansatz choice, different initializations should be considered and tested.

For the special case of the unbiased initialization for block structure, we tested different intializations on our toy datasets which we show in Figure \ref{fig:pairwise}. We see that the unbiased initialization works very well in both cases. As a caveat, we recall that these toy datasets are slightly artificial. Also, this specific case may yield optimal ans{\"a}tze that display a lot of structure (closer to Fourier interferometers), which might make the model more vulnerable to classical simulation in the deployment phase. However, this still shows a nice example of a particularly efficient initialization strategy.

Then, when we explored the effect of kernel choice, we saw that the Gaussian kernel was interestingly  not always the most adequate, and that the polynomial kernel often offered better performance, as shown in Figures \ref{fig:numerical_results_E} and \ref{fig:numerical_results_F} for the user preference and bioinformatics datasets, respectively. In Figure \ref{fig:numerical_results_E}, we see that switching to the polynomial kernel allows the quantum model to slightly overtake the RBM benchmark. In Figure \ref{fig:numerical_results_F}, we highlight the comparison between the random model and the QCBM, since absolute values of the MMD should not be compared when changing kernel, but instead performance relative to benchmarks.

Still, we must note that our classical benchmarks are fairly simple models, and that there remains a gap between all models and the ground truth MMD. It is true that linear optical models are not universal models, as pointed out in the independent work of \cite{kurkin2026universalityclassicallytrainablequantumdeployed} and they can lack in expressivity, even when considering fixed-Hamming-weight bitstring datasets, as the space of all unitaries cannot be reached when mapping a one-photon unitary $U_\theta$ to an $n$-photon operator $\tilde{U}_\theta$. In the independent work of \cite{kurkin2026universalityclassicallytrainablequantumdeployed}, the authors propose constructions that may increase the expressivity of linear optical models while retaining hardness of sampling and efficiency of classical training, at the cost of increasing the resources in terms of number of modes $m$. It would be very interesting to test these constructions numerically in future work.

\section{Discussion}
In this work, we present and implement a method to train photon-native quantum generative models efficiently on classical hardware. Importantly, once the model is trained, deploying the model, i.e. sampling from it, coincides with the problem of boson sampling, a task conjectured to be classically hard and thus requiring quantum hardware in general. Of course, the parameters obtained during training may define unitaries with a structure too specific to prove any complexity-theoretic statement, as was rightfully highlighted recently in \cite{herbst2025limitsquantumgenerativemodels}; or their structure may even allow for classical simulability of sampling. Nevertheless, this approach allows us to propose a photonic QML algorithm which is both scalable and has a decent chance to escape dequantization.

With this work, we add a new string to the bow of "train on classical, deploy on quantum" methods. In the short timespan since the original proposal of \cite{Recio-Armengol2025} came out, many related studies have emerged \cite{kurkin2025universalitykerneladaptivetrainingclassically, herrerogonzalez2025bornultimatumconditionsclassical, ballogimbernat2025shallowiqpcircuitgraph, bako2025fermionicbornmachinesclassical}, some of which we have already had the opportunity to highlight in the text. Particularly relevant to our case are the independent works of \cite{kurkin2026universalityclassicallytrainablequantumdeployed, kolarovszki2026generativemodelinggaussianboson} on photonic generative models based on Fock-state boson sampling and Gaussian boson sampling respectively -- in the latter case which corresponds to continuous variable photonics, the authors reach the remarkable amount of $10^6$ trainable parameters in their numerical simulations. And from the experimental perspective, a first demonstration for shallow IQP circuits was carried out by \cite{ballogimbernat2025shallowiqpcircuitgraph} with model deployment  on superconducting hardware.

These methods are part of a larger trend of using quantum circuits only where they are strictly necessary within a classical pipeline, as for the "quantum-enhanced classical simulation" paradigm \cite{Cerezo_2025, zimboras2025mythsquantumcomputationfault, Huang_2022} -- including recent photonics results for classical shadow protocols in \cite{thomas2025sheddinglightclassicalshadows}. This trend aligns well with the impressive capabilities of classical computers for certain algorithms: for example, training the circuit classically allows us to use tools like automatic differentiation.

The "train on classical, deploy on quantum" paradigm is another reason to believe that generative learning is a promising area to look for utility of boson sampling. While our work is only a step in that direction, we believe that the datasets and use cases we propose are a good a starting point, and we believe there is potential in looking for more applications in this direction.

Many open questions remain: for instance, it would be interesting to extend our work to the collision case, as well as to different types of input states, as discussed in the text. Additionally, we can certainly explore cases with much larger number of photons and modes if we exploit more significant classical computing resources and high-performance computing.

Deploying our models on actual devices appears as a natural next step -- in fact, this may be a strength of using linear optical models, as the corresponding circuits are relatively easy to deploy experimentally. Indeed, all required optical components are readily available and interferometers are steadily increasing in size, although the question of noise, in particular photon loss, remains a challenge. Then, it will be crucial to study our models under noisy conditions \cite{Aaronson_2016, Garc_a_Patr_n_2019}, and those will certainly imply higher thresholds in terms of the number of photons and modes required to reach an advantage regime for model deployment -- in which case, studying the energetic advantage perspective as was done in \cite{soret2026quantumenergeticadvantagecomputational} will be of significant interest.

\section*{Acknowledgements}
The authors would like to thank Joseph Bowles and Pierre-Emmanuel Emeriau for fruitful discussions. This work has been co-funded by
the QuantERA project ResourceQ under the grant
agreement ANR-24-QUA2-007-003, and by the EIC
Pathfinder program project QUONDENSATE under
the grant agreement 101130384.

\section*{Code availability}
Our code is available within Quandela's MerLin platform, at \url{https://github.com/merlinquantum/generative_classical_training}.

\section*{Author Contributions}
F.G. and C.F. carried out the development of the code pipeline, contributed to the theoretical framework and to the manuscript writing. R.M. and B.V. contributed to the theoretical framework and manuscript writing. S.M. reviewed the manuscript. A.S. led the project, and contributed to the code pipeline, theoretical framework and manuscript writing. Large language model tools were used to assist with the development of our code. All scientific content, calculations, simulations, conclusions, and final text were reviewed and approved by the authors, who take full responsibility for the work.

%\begin{figure}
%\centering
%\includegraphics[width=0.3\textwidth]{frog.jpg}
%\caption{\label{fig:frog}This frog was uploaded via the file-tree menu.}
%\end{figure}

\bibliographystyle{unsrt}
\bibliography{sample}

\appendix
\onecolumngrid

\section{Kernel choice}\label{app:kernels}
First, we consider two polynomial kernels. The first one is the \textbf{standard polynomial kernel}:
\[
K_{\mathrm{poly,1}}(\x,\y)
=
\left(
\frac{c+\x\cdot \y}{c+n}
\right)^d,
\qquad |\x|=|\y|=n.
\]
On the subspace of fixed Hamming-weight bitsrings, the kernel is stationary, since:
\[
\x\cdot \y
=
n-\frac{|\x\oplus \y|}{2}.
\]
However, its Walsh (or Fourier) coefficients are not always non-negative, in which case they cannot properly define a probability distribution $\mathcal{P}(\k)$ for the MMD observable. Therefore, this
kernel can only be used directly in our estimator under additional conditions on the parameters. We find that a sufficient condition is to choose $c$ such that:
\[
c+n-\frac{m}{4}\geq 0.
\]
The second polynomial kernel is a \textbf{polynomial kernel on parity features}. We first define:
\[
\Phi(x)=\frac{1}{\sqrt{m}}((-1)^{x_1},...,(-1)^{x_m}),
\]
and then
\[
K_{poly,2}(\x,\y)=(\frac{c+\Phi(\x).\Phi(\y)}{c+1})^d.
\]
Equivalently, the kernel can be written as:
\[
K_{poly,2}(\x,\y)=(\frac{c+\frac{1}{m}\sum_{i=1}^m(-1)^{x_i \oplus x_j}}{c+1})^d. 
\]
This kernel is naturally adapted to the Walsh observables, i.e. to the operators  $W^{\k}$ of equation (\ref{eq:z_mbym_matrix}) of the main text, since it is built directly from parity features. Its Walsh coefficients are non-negative, so they define a  valid sampling distribution $\mathcal{P}(\k)$.

As an alternative approach, instead of starting from a usual kernel formula in data space, we directly choose a probability distribution over the $W^{\k}$ observables. We call those \textbf{low-order Walsh kernels}. For a maximum order $r$, we sample $\k$ with \(1\leq |\k|\leq r\). More precisely,
\[
P_{\mathrm{low}}(k)
=
\begin{cases}
\dfrac{1}{r}\dfrac{1}{\binom{m}{|\k|}},
& 1\leq |\k|\leq r,\\[1.2ex]
0,
& \text{otherwise}.
\end{cases}
\]

Finally, we introduce two data-adaptive kernels, where the kernel is defined based on the target dataset. The first one is a \textbf{weighted Gaussian kernel}, an alternative to the Gaussian kernel: 
\[
K_{\mathrm{wG}}(\x,\y)
=
\exp\left(
-\frac{1}{2\sigma^2}
\sum_{i=1}^m w_i (x_i-y_i)^2
\right).
\]
Here, we add weights $w_i$ to the usual Gaussian kernel, chosen according to the empirical marginal occupations of the modes in the training dataset $\mu_i=\frac{1}{N}\sum_{l=1}^N x_i^{l}$. Thus:
\begin{equation}
w_i=\frac{\mu_i+\epsilon}{\sum_{j=1}^m (\mu_i+\epsilon)}.
    \label{weights}
\end{equation}

The second data-adaptive variant is a \textbf{data-biased low-order Walsh kernel}. It keeps the same low-order structure as the low-order Walsh kernel, but the support of $\k$ is not chosen uniformly over all modes. Instead, modes are sampled according to equation (\ref{weights}). Thus, Walsh observables $W^{\k}$ involving frequently occupied modes are sampled over often. This gives a valid probability distribution $P_{biased}(\k)$, and therefore a valid positive semidefinite stationary kernel.

Recall that all the above kernels can be inserted into the MMD estimator from Figure \ref{fig:mmd_procedure}, keeping the estimator identical but changing only the sampling distribution $\mathcal{P}(\k)$. This allows us to easily compare how different choices of observables affect the training of the QCBM.

\section{Initialization strategies}
\subsection {Close to identity initialization - multiple photons }
\label{app: close to identity}
In this appendix, we show that a close-to-identity initialization at the single-particle level remains close to the identity when lifted to the \(n\)-photon sector.

Following the standard representation of passive linear optics, an
\(m\)-mode interferometer can be written as
\[
U(\boldsymbol{\epsilon})
=
\exp\left(
i\sum_j \epsilon_j h^j
\right),
\]
where the \(h^j\)'s are single-particle generators of the linear-optical
circuit and the coefficients
\[
\boldsymbol{\epsilon}=(\epsilon_j)_j
\]
are small initialization parameters. The induced transformation on the
\(n\)-photon sector is given by the bosonic representation
\(\phi_m^n(U)\). We refer to \cite{aaronson2011} for the formal
construction of this representation and of the induced generators.

For small initialization parameters, the Taylor expansion gives
\[
U(\boldsymbol{\epsilon})
=
I_m
+
i\sum_j \epsilon_j h^j
+
\mathcal O(\|\boldsymbol{\epsilon}\|^2).
\]
For a photon initially injected in mode \(i\), this implies
\[
a_i^\dagger
\longmapsto
a_i^\dagger
+
i\sum_{r=1}^m
\left(
\sum_j \epsilon_j (h^j)_{ri}
\right)
a_r^\dagger
+
\mathcal O(\|\boldsymbol{\epsilon}\|^2).
\]
Thus, at first order, the photon remains mostly in its initial mode, with
small amplitudes to be displaced to other modes.

Now consider a collision-free \(n\)-photon input
\[
|S\rangle
=
a_{i_1}^\dagger\cdots a_{i_n}^\dagger |0^m\rangle.
\]
The induced transformation acts by applying the same mode transformation to
each creation operator:
\[
\phi_m^n(U(\boldsymbol{\epsilon}))|S\rangle
=
\prod_{\ell=1}^n
\left(
\sum_{r=1}^m U(\boldsymbol{\epsilon})_{r i_\ell}a_r^\dagger
\right)
|0^m\rangle.
\]
Using the first-order expansion of \(U(\boldsymbol{\epsilon})\), we obtain
\[
\phi_m^n(U(\boldsymbol{\epsilon}))|S\rangle
=
|S\rangle
+
i\sum_{\ell=1}^n
a_{i_1}^\dagger\cdots
\left[
\sum_{r=1}^m
\left(
\sum_j \epsilon_j (h^j)_{r i_\ell}
\right)
a_r^\dagger
\right]
\cdots
a_{i_n}^\dagger |0^m\rangle
+
\mathcal O(n^2\|\boldsymbol{\epsilon}\|^2).
\]
The first-order correction contains \(n\) terms, corresponding to the choice
of which photon is weakly displaced. Hence it scales as
\[
\mathcal O(n\|\boldsymbol{\epsilon}\|).
\]
The second-order remainder comes from terms where two photons are perturbed,
giving the bound
\[
\mathcal O(n^2\|\boldsymbol{\epsilon}\|^2).
\]
Therefore, for each collision-free input configuration,
\[
\phi_m^n(U(\boldsymbol{\epsilon}))|S\rangle
=
|S\rangle
+
\mathcal O(n\|\boldsymbol{\epsilon}\|)
+
\mathcal O(n^2\|\boldsymbol{\epsilon}\|^2).
\]
In particular, if \(n\|\boldsymbol{\epsilon}\|\ll 1\), the close-to-identity
initialization remains close to the identity on the \(n\)-photon sector.

\subsection{Data-dependent warm start}
\label{app:data-dependent-init}

This appendix details the computation used in the data-dependent initialization.
The goal is to initialize the interferometer such that its single-mode marginals
match the empirical marginals of the training dataset.

Let
\[
X_{\mathrm{train}}\in\{0,1\}^{N\times m}
\]
be the training dataset, where each row is a collision-free configuration with
fixed Hamming weight \(n\). The empirical marginal occupation of mode \(j\) is
defined as
\[
p_j^{\mathrm{target}}
=
\frac{1}{N}
\sum_{\ell=1}^{N}
X_{\ell,j}.
\]
Thus \(p_j^{\mathrm{target}}\) represents the empirical probability that output
mode \(j\) is occupied in the training set.

We consider an input state with one photon in each of the first \(n\) modes.
Let \(U(\theta)\in\mathbb{C}^{m\times m}\) be the unitary implemented by the
interferometer. Since only the first \(n\) input modes are occupied, the relevant
quantities are the first \(n\) columns of \(U(\theta)\). For each output mode
\(j\), define
\[
S_{j,i}(\theta)
=
|U_{j,i}(\theta)|^2,
\qquad
i=1,\ldots,n.
\]

Following the binned boson-sampling marginal formula of \cite{Seron_2024}, the
probability that no photon is detected in output mode \(j\) can be written as
\[
P^{(j)}(0)
=
\operatorname{perm}
\left(
I_n - H_n^{(j)}
\right),
\]
where
\[
H_n^{(j)}
=
u_j u_j^\dagger,
\]
and \(u_j\) is the restriction of the \(j\)-th row of \(U(\theta)\) to the first
\(n\) columns:
\[
u_j
=
\left(
U_{j,1}(\theta),
\ldots,
U_{j,n}(\theta)
\right).
\]
Equivalently,
\[
H_{a,b}^{(j)}
=
U_{j,a}^{*}(\theta)U_{j,b}(\theta),
\qquad
a,b=1,\ldots,n.
\]

In the no-collision regime, the probability that mode \(j\) is occupied is
approximated by
\[
\widehat p_j(\theta)
\simeq
P^{(j)}(1)
\simeq
1-P^{(j)}(0).
\]
Hence,
\[
\widehat p_j(\theta)
=
1-
\operatorname{perm}
\left(
I_n-u_j u_j^\dagger
\right).
\]

The matrix \(u_j u_j^\dagger\) has rank one. Therefore, using standard identities
for permanents of rank-one perturbations of the identity \cite{Minc_1984}, we can
rewrite
\[
\operatorname{perm}
\left(
I_n-u_j u_j^\dagger
\right)
=
\sum_{k=0}^{n}
(-1)^k k!\,
e_k
\left(
|U_{j,1}(\theta)|^2,
\ldots,
|U_{j,n}(\theta)|^2
\right),
\]
where \(e_k\) denotes the elementary symmetric polynomial of degree \(k\). Since
\(e_0=1\), we obtain
\[
\widehat p_j(\theta)
=
\sum_{k=1}^{n}
(-1)^{k+1} k!\,
e_k
\left(
S_{j,1}(\theta),
\ldots,
S_{j,n}(\theta)
\right).
\]

The elementary symmetric polynomials are computed efficiently by dynamic
programming. For a vector
\[
s^{(j)}
=
\left(
S_{j,1},\ldots,S_{j,n}
\right),
\]
we initialize
\[
e_0=1,
\qquad
e_k=0
\quad
\text{for } k\geq 1.
\]
Then, for each entry \(s_i^{(j)}\), we update backwards:
\[
e_k
\leftarrow
e_k+s_i^{(j)}e_{k-1},
\qquad
k=i,i-1,\ldots,1.
\]
This computes all elementary symmetric polynomials
\(e_0,\ldots,e_n\) in \(O(n^2)\) operations for a fixed mode \(j\), and therefore
in \(O(mn^2)\) operations for all output modes.

The data-dependent initialization is then obtained by minimizing the auxiliary
loss
\[
\mathcal L_{\mathrm{marg}}(\theta)
=
\frac12
\sum_{j=1}^{m}
\left(
\widehat p_j(\theta)
-
p_j^{\mathrm{target}}
\right)^2.
\]
The optimized parameters are used as the initial parameters for the subsequent MMD training. Naturally, this procedure does not try to match the full target distribution at initialization. It only aims at matching first-order occupation statistics. The full training objective is still optimized afterwards using the MMD loss.

\subsection{Unbiasedness in expectation under Haar initialization}
\label{app:unbiased-init}

We justify here why drawing the interferometer from the Haar measure gives an
unbiased initialization in expectation over collision-free configurations.

Let
\[
\widetilde{\mathcal X}_{m,n}
=
\{\x\in\{0,1\}^m:\ |\x|=n\}
\]
be the set of collision-free configurations with \(n\) photons in \(m\) modes.
For a fixed input state, let \(q_U(\x)\) denote the probability of observing the
output configuration \(x\) after applying the interferometer \(U\).

Let \(\x,\x'\in\widetilde{\mathcal X}_{m,n}\). Since both configurations have the
same Hamming weight, there exists a permutation matrix \(P\) such that
\(P|\x\rangle=|\x'\rangle\). The Haar measure is invariant under multiplication by
a fixed unitary matrix, hence \(PU\) has the same distribution as \(U\). Therefore,
\[
\mathbb{E}_{U\sim\mathrm{Haar}}[q_U(\x)]
=
\mathbb{E}_{U\sim\mathrm{Haar}}[q_{PU}(\x)]
=
\mathbb{E}_{U\sim\mathrm{Haar}}[q_U(\x')].
\]
Thus all collision-free configurations have the same expected probability. Since
the probabilities sum to one over \(\widetilde{\mathcal X}_{m,n}\), we obtain
\[
\sum_{\x\in\widetilde{\mathcal X}_{m,n}}
\mathbb{E}_{U\sim\mathrm{Haar}}[q_U(\x)]
=
1.
\]
There are \(\binom{m}{n}\) such configurations, so
\[
\mathbb{E}_{U\sim\mathrm{Haar}}[q_U(\x)]
=
\frac{1}{\binom{m}{n}},
\qquad
\x\in\widetilde{\mathcal X}_{m,n}.
\]
This shows that Haar initialization does not favor any collision-free output
configuration on average, even though a single Haar-random interferometer is not
itself uniform \cite{aaronson2014}.

\end{document}